\newcommand*{\FN}{\operatorname{FN}}
\newcommand*{\FP}{\operatorname{FP}}
\newcommand*{\FPR}{\operatorname{FPR}}
\newcommand*{\FNR}{\operatorname{FNR}}
\newcommand*{\DUPLICATE}{\textsf{DUPLICATE}\xspace}
\newcommand*{\UNSEEN}{\textsf{UNSEEN}\xspace}
\newcommand*{\wdup}{\in_w}
\newcommand*{\notwdup}{{{\notin_w}}}
\pgfplotsset{compat=1.14}
\tikzset{mynode/.style={circle,draw,font=\small,minimum width=1cm,font=\sffamily}}
\providecommand{\keywords}[1]
{
	\small	
	\textbf{\textit{Keywords---}} #1
}
\newtheorem{definition}{Definition}
\newtheorem{theorem}{Theorem}
\begin{document}
\title{Approaching Optimal Duplicate Detection in a Sliding Window}
%
%
\author[1]{Rémi Géraud--Stewart}
\author[1,2]{Marius Lombard-Platet}
\author[1]{David Naccache}
\affil[1]{Département d'informatique de l'ENS, ENS, CNRS, PSL Research University, Paris, France}
\affil[2]{Be-ys Research Lab, Clermont-Ferrand, France}
\affil[ ]{\texttt{\{first\}.\{last\}@ens.fr}}
\date{}
%
%
\maketitle              
\begin{abstract}
Duplicate detection is the problem of identifying whether a given item has previously appeared in a (possibly infinite) stream of data, when only a limited amount of memory is available. 


Unfortunately the infinite stream setting is ill-posed, and error rates of duplicate detection filters turn out to be heavily constrained: consequently they appear to provide no advantage, asymptotically, over a biased coin toss \cite{10.1145/3297280.3297335}.


In this paper we formalize the sliding window setting introduced by \cite{She08,Yoo10}, and show that a perfect (zero error) solution can be used up to a maximal window size $w_\text{max}$. Above this threshold we show that some existing duplicate detection filters (designed for the \emph{non-windowed} setting) perform better that those targeting the windowed problem. Finally, we introduce a \enquote{queuing construction} that improves on the performance of some duplicate detection filters in the windowed setting.

We also analyse the security of our filters in an adversarial setting. 
\keywords{Duplicate detection, Streaming algorithms, Sliding window}
\end{abstract}
\section{Introduction}
\subsection{Motivation}

Throughout this paper, we are interested in the following problem: 
\begin{definition}[Duplicate detection problem over a sliding window, wDDP] Given a stream $E_n = (e_1, e_2, \dotsc, e_n)$, a sliding window size $w$ and a \enquote{new} item $e^\star$, find whether $e^\star$ is also present in the last $w$ elements of the stream, ie., whether $e^\star \in \{e_{n-w+1}, \dotsc, e_n\}$.
At every time increment, the new item is added to the stream, i.e., $E_{n+1} = E_n \mid e^\star$ where $\mid$ denotes concatenation. 
\end{definition}

Note that for $w = \infty$, the problem becomes finding whether an element is a duplicate amongst all previous stream elements. For simplicity in the notation, when we refer to $\infty$DDP we instead write DDP.

Instances of the wDDP abound in computer science, with applications to file system indexation, database queries, network load balancing, network management \cite{10.5555/647912.740658}, in credit card fraud detection \cite{DBLP:journals/corr/abs-1709-08920}, phone calls data mining \cite{10.1145/347090.347094}, etc. A discussion about algorithms on large data streams can be found in \cite{10.1145/776985.776986}. 

In practice, additional constraints exist that we can capture with the following definition:
\begin{definition}[wDDP with bounded memory]
	At every time step $n$, given $e^{\star}$ and a current state (dependent on history) of at most $M$ bits, solve the wDDP for $E_n$ and $e^{\star}$.
\end{definition}
Perfect detection is however not always reachable and it might be more practical to work on a further relaxation of the problem, allowing for errors.


Approximate duplicate detection has many real-life use cases, and can sometimes play a critical role, for instance in cryptographic schemes where all security and secrecy fall apart as soon as a random nonce is used twice, such as the ElGamal \cite{10.1007/3-540-39568-7_2} or ECDSA signatures. Other uses include improvements over caches \cite{4484874}, duplicate clicks \cite{10.1145/1060745.1060753} and others. Please note that approximate detection is a different problem than detection of approximate duplicates \cite{Monge97anefficient}, in which the goal is to find elements similar but not necessarily equal to the target.

On a side note, it is clear that the input distribution plays a central role regarding how efficiently the wDDP can be solved. For instance, some deterministic streams may be expressed very compactly (such as the output of a PRNG with known seed) making the wDDP relatively easy. Information-theoretically, if the source has $U$ bits of entropy then the situation is equivalent to having an $U$-bit, uniformly distributed input. This is the setting we consider here.

As said before, when the window size in wDDP grows infinitely large, it becomes the following problem: find whether $e^\star \in E_n$. 
Unfortunately any solution to this problem will necessary encounter a phenomenon called \enquote{saturation} on large enough data streams \cite{10.1145/3297280.3297335}, and when it happens the algorithm performs no better than at random. 

This is problematic on two grounds: it makes the comparison of several algorithms difficult (since they all asymptotically behave in that fashion), and the unavoidable saturation ruins any particular design's merits. As such, it is more interesting to focus on wDDP rather than DDP.

\subsection{Contributions}
In this paper, we start from a naïve solution for the wDDP to then derive bounds for when it can be solved within $M$ memory bits, up to a window size $w_\text{max}$, in constant time. We then introduce a generalization of the naïve solution, and study its error rate. We show that this construction, which we call Short Hash Filter (SHF), can push the value $w_\text{max}$ further while operating in constant time --- at the cost of some errors. We also provide a different tradeoff, the Compact Short Hash Filter (CSHF), which uses fewer memory but operates in linear time.

Unfortunately, for $w > w_\text{max}$ the performance of SHF degrades very rapidly. We therefore turn our attention to existing data structures designed for the \enquote{non-windowed} setting. We show that some of them outperform dedicated data structures, including SHF, in the $w > w_\text{max}$ regime.

We then introduce the \enquote{queuing construction}, a black box transformation of non-windowed data structures into windowed ones, that improves their performance in the wDDP setting.

Finally, we provide an analysis of our queueing construction's resistance to adversarial streams.

\subsection{Related work}\label{sub:related}
The notion of sliding window was, as far as we know, first introduced in \cite{10.1145/1060745.1060753}; but several variations exist that are incomparable to one another (e.g. \cite{shtul2020agepartitioned}).
The wDDP formulation we rely on is due to \cite{Yoo10,She08}, which also introduce algorithms for solving the wDDP approximately.

The notion of using subfilters, as in the queuing construction, can be found in the A2 filter's design \cite{Yoo10} and a variation thereof can be found in \cite{shtul2020agepartitioned} but in a different DDP formulation. The A2 is built from two Bloom filters, a construction which we generalize and analyse generically in this paper. Similarly, the construction in \cite{shtul2020agepartitioned} only works with Bloom Filters.

A literature review collects the following DDF constructions: A2 filters \cite{Yoo10}, Stable Bloom Filters (SBF) \cite{Den06}, Quotient Hash Tables (QHT) \cite{10.1145/3297280.3297335}, Streaming Quotient Filters (SQF) \cite{Dut13}, Block-decaying Bloom Filters (b\_DBF) \cite{She08}, and a slight variation of Cuckoo Filters \cite{Fan14} suggested by \cite{10.1145/3297280.3297335}. The structure proposed in \cite{10.1145/1060745.1060753} is not designed for wDDP but a variant called `landmark` sliding window, which consists of a zero-resetting of the memory at some user-defined epochs.

\label{sec:dudefisa}
\section{Notations and basic definitions}
We consider an unbounded stream $E = (e_1, e_2, \dotsc, e_n, \dotsc)$ with elements belonging to an alphabet $\Gamma$.

A filter is an algorithm, which has a finite amount of memory $M$ and, for each new element $e$, outputs $\DUPLICATE$ or $\UNSEEN$ whether it thinks $e$ is a duplicate or not.

We usually consider the situations where the available memory is too small for perfect detection, i.e., $M \ll |\Gamma|$. Otherwise, if $M = |\Gamma|$ then the problem can be solved in constant time without errors \cite{10.1145/3297280.3297335}.

An element $e_j$ is a duplicate in $E$ over the sliding window $w$, and we note $e_i \wdup E$ if there exists $j-w \leq i < j$ such that $e_i = e_j$. Otherwise we note $e_i \notwdup E$, and we say $e_i$ is unseen over $w$.
A false positive over $w$ is an element $e\notwdup E$ which is classified as a duplicate, and a false negative is an element $e\wdup E$ which is classified as unseen.

For a filter, the false positive probability ($\FP^w_i$) is the probability that after $i$ insertions, the \emph{unseen} element $e_i\notwdup E$ is a false positive over $w$. The false positive rate $\FPR^w_i$ is the number of false positives divided by the number of unseen elements in E\footnote{We observe that $\FPR^w_n = \frac{1}{n}\sum_{i=1}^n \FP^w_i$, and similarly for $\FNR^w_n$.}. We similarly define the false negative probability $\FN^w_i$, and the false negative rate $\FNR^w_i$.

\paragraph{Remark.} For benchmarking, we usually measure the error rate $ER = \FPR^w + \FNR^w$, as it allows a practical ranking of the solutions. An error rate of $0$ means a perfect filter, while a filter answering randomly has an error rate of $1$. A filter being always wrong has an error rate of $2$.

%

\section{Approximate solution and SHF}

\subsection{Optimal and Approximate Optimal wDDF}
\begin{theorem}\label{thm:opt_wddf}
	For $M \geq w (\log_2(w) + 2\log_2(|\Gamma|))$, the wDDP can be solved exactly (with no errors) in constant time.
\end{theorem}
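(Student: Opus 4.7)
The plan is to combine a circular buffer with a hash table so that every operation runs in $O(1)$ while the combined structure stays within the advertised memory budget. The circular buffer $B$ of length $w$ stores the identities of the last $w$ stream elements in arrival order; each slot costs $\log_2 |\Gamma|$ bits, for a total of $w \log_2 |\Gamma|$ bits. The hash table $H$, of capacity $\Theta(w)$, is keyed by the distinct elements currently in the window; each occupied entry stores the element itself ($\log_2 |\Gamma|$ bits) together with either a reference count or a pointer back into $B$ ($\log_2 w$ bits). Summing both structures gives $w \log_2 |\Gamma| + w(\log_2 |\Gamma| + \log_2 w) = w(\log_2 w + 2 \log_2 |\Gamma|)$ bits, matching the stated bound.

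I would then describe the update procedure on the arrival of $e^\star$. First, query $H$ for $e^\star$: output \DUPLICATE\ if present, and \UNSEEN\ otherwise. Second, let $e_{\text{old}}$ be the element at the head of $B$ (the one about to leave the window); decrement its counter in $H$ and, if it reaches zero, delete its entry. Third, overwrite $B$'s head slot with $e^\star$, advance the circular pointer, and insert $e^\star$ into $H$ (or increment its counter). Correctness is immediate: $H$ contains exactly the set of elements present in the last $w$ positions of the stream at every step, so a lookup answers the wDDP exactly, with neither false positives nor false negatives.

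For the time bound I would appeal to a standard constant-time dictionary: since the load factor is bounded away from $1$ by construction (capacity is $\Theta(w)$ and the table contains at most $w$ distinct elements), one can use e.g.\ cuckoo hashing or a two-level perfect hashing scheme to obtain worst-case $O(1)$ lookup and update, at the cost of only a constant factor in memory (which can be absorbed into the constants or, if desired, handled explicitly by slightly tightening the buffer/table capacity trade-off).

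The main obstacle is making the constant-time claim rigorous rather than heuristic: naive chained hashing only gives expected $O(1)$. I would therefore either invoke an off-the-shelf deterministic dictionary with the needed bounds, or make the bound a little looser (absorbing the constant in the inequality) so that the construction is genuinely worst-case $O(1)$. A secondary subtlety is bookkeeping the circular buffer's head index and the hash table occupancy counters in only $O(\log w)$ extra bits, which fits comfortably inside the stated budget as lower-order terms.
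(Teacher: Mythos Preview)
Your proposal is correct and is essentially the same construction as the paper's: a FIFO queue (your circular buffer) holding the last $w$ elements together with a dictionary (your hash table) mapping each present element to a multiplicity counter, with the same memory accounting $w\log_2|\Gamma| + w(\log_2|\Gamma| + \log_2 w)$. If anything, you are slightly more careful than the paper in justifying the constant-time dictionary operations (invoking cuckoo or perfect hashing rather than simply asserting $O(1)$), but the overall argument is the same.
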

	
\begin{proof}
	We explicitly construct a DDF that performs the detection. Storing all $w$ elements in the sliding window takes $w\log_2(|\Gamma|)$ memory, using a FIFO queue $Q$; however 
	lookup has a worst-time complexity of $O(w)$. 
	
	We therefore rely on an
	ancillary data structure for the sake of quickly answering lookup questions.
	Namely we use a dictionary $D$ whose keys are elements from $\Gamma$ and values are counters.
	
	When an element $e$ is inserted in the DDF, $e$ is stored and $D[e]$ is incremented (if the key $e$ did not exist in $D$, it is created first, and
	$D[e]$ is set to $1$). In order to keep the number of stored elements to $w$, 
	we discard the oldest element $e_\text{last}$ in $Q$. As we do so, we also decrement $D[e_\text{last}]$, and if $D[e_{\text{last}}] = 0$ the key is deleted from $D$. The whole insertion procedure is therefore performed in constant time.
	
	Lookup of an element $e^\star$ is simply done by looking whether the key $D[e^\star]$ exists, which is done in constant time.
	
	The queue  size is $w\log_2 |\Gamma|$, the dictionary size is $w (\log_2 |\Gamma| + \log_2 w)$ (as the dictionary cannot have more than $w$ keys at the same time, a dictionary key occupies $\log_2 |\Gamma|$ bits and a counter cannot go over $w$, thus being less than $\log_2 w$ bits long). Thus a requirement of $w (\log_2(w) + 2\log_2(|\Gamma|))$ bits for this DDF to work.
	
	Finally this filter does not make any mistake, as the dictionary $D$ keeps an exact account of how many times each element is present in the sliding window.
	\end{proof}

However, this optimal filter requires that the size of $\Gamma$ is known in advance. The dependence on $\log_2 |\Gamma|$ can be dropped, at the cost of allowing errors.

\begin{theorem}\label{thm:SHF}
	Let $w \in \mathbb N$. Let $M \simeq 2w\log_2w$, then the wDDP can be solved with almost no error using $M$ memory bits.
	
	More precisely, it is possible to create a filter of $M$ bits with an FN of $0$, an FP of $1 - (1-\frac1{w^2})^w \sim \frac 1w$, and a time complexity of $O(w)$.
	
	Using $M \simeq 5w\log_2 w$ bits of memory, a constant-time filter with the same error rate can be constructed.
\end{theorem}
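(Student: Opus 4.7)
The idea is to reuse the construction of Theorem~\ref{thm:opt_wddf} but replace each stored element from $\Gamma$ by a \emph{short hash} of it, thereby dropping the $\log_2|\Gamma|$ factor at the cost of occasional hash collisions.

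First, I would fix a hash function $h\colon \Gamma \to \{0,1\}^{2\lceil\log_2 w\rceil}$, modelled as a uniformly random function into a range of size $w^2$. The basic SHF keeps a FIFO queue $Q$ of length $w$ containing $h(e_{n-w+1}), \ldots, h(e_n)$. On insertion of $e^\star$, one pushes $h(e^\star)$ and pops the oldest entry; on lookup, one linearly scans $Q$ and returns \DUPLICATE iff $h(e^\star)$ appears in it. This uses $w \cdot 2\log_2 w = 2w\log_2 w$ bits and $O(w)$ time per operation, matching the first memory budget in the statement.

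Next, I would analyse the error rates. Since $e = e'$ implies $h(e) = h(e')$, every true duplicate over the window is detected, so \FN\ is exactly $0$. For a genuinely unseen $e^\star$, the probability that $h(e^\star)$ collides with at least one of the $w$ stored hashes is $1 - (1-1/w^2)^w \sim 1/w$, by a first-order expansion. For the constant-time variant I would mimic the dictionary trick of Theorem~\ref{thm:opt_wddf}, but keyed on hashes rather than on raw elements: the dictionary holds at most $w$ entries, each a hash of size $2\log_2 w$ bits paired with a counter bounded by $w$ (so at most $\log_2 w$ bits), contributing $3w\log_2 w$ bits. Adding the queue yields $5w\log_2 w$ bits, with $O(1)$ time per insertion and lookup. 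The error analysis is unchanged because the dictionary maintains exactly the multiset of hashes currently in the queue.

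The main technical subtlety is justifying the $(1-1/w^2)^w$ bound. One has to model $h$ as a uniformly random function and argue that, conditioned on $e^\star$ being unseen over the window, the hash $h(e^\star)$ remains uniform and independent of the $w$ stored hashes; this is immediate from the assumption that $e^\star \neq e_{n-w+1}, \ldots, e_n$ (so we are evaluating $h$ on a fresh input) and the random-oracle model for $h$. One must also note that the stored hashes need not be distinct from each other, but the union bound over $w$ independent slots still gives the claimed expression, since the FP probability is computed slot-by-slot.
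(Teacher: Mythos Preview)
Your proposal is correct and follows essentially the same approach as the paper: hash elements to $2\log_2 w$ bits, store $w$ hashes in a FIFO queue for the $O(w)$-time filter, and add the auxiliary counter dictionary (with $2\log_2 w$-bit keys and $\log_2 w$-bit counters) to obtain the constant-time variant at $5w\log_2 w$ bits. If anything, your discussion of the random-oracle justification for the $(1-1/w^2)^w$ expression is slightly more careful than the paper's, which simply states the formula after invoking the birthday bound.
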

Note that we only consider the false positive probability after the filter has inserted at least $w$ elements, i.e., once the filter is full and has reached a stationary regime.

\begin{proof}
	Here again we explicitly construct the filters that attain the theorem's bounds.
	
	Let $h$ be a hash function with codomain $\{0,1\}^{2 \log_2 w}$.
	The birthday theorem \cite{10.1007/3-540-45708-9_19} states that for a hash function $h$ over $a$ bits, one must on average collect $2^{a/2}$ input-output pairs before obtaining a collision. Therefore $2^{(2 \log_2 w) / 2} = w$ hash values $h(e_i)$ can be computed before having a $50\%$ probability of a collision (here, a collision is when two distinct elements of the stream $e_i, e_j$ with $i \neq j, e_i \neq e_j$ have the same hash, i.e. $h(e_i) = h(e_j)$). The 50\% threshold we impose on $h$ is arbitrary but nonetheless practical.
	
	Let $\mathcal F$ be the following DDF: the filter's state consists in a queue of $w$ hashes, and for each new element $e$, $\mathsf{Detect}(e)$ returns \DUPLICATE if $h(e)$ is present in the queue, \UNSEEN otherwise. $\mathsf{Insert}(e)$ appends $h(e)$ to the queue before popping the queue. 
	
	There is no false negative, and a false positive only happens if the new element to be inserted collides with at least one other element, which happens with probability $1 - (1 - \frac 1{2^{2\log_2w}})^{w} = 1 - (1-\frac1{w^2})^w$, hence an FN of $0$ and a FP of $1 - (1-\frac1{w^2})^w$.
	The queue stores $w$ hashes, and as such requires $w \cdot 2\log_2 w$ bits of memory.
	
	Note that this solution has a time complexity of $O(w)$. Using an additional dictionary, as in the previous proof, but with keys of size $2\log_2(w)$, we get a filter with an error rate of about $\frac 1w$ and constant time for insertion and lookup, using $w \cdot 2\log_2 w + w \cdot (2\log_2(w) + \log_2(w)) = 5w\log_2 w$ bits of memory.
	\end{proof}

When $\log_2|\Gamma| > 5 \log_2 w$ this DDF outperforms the naïve strategy\footnote{The naïve strategy consisting of storing the $w$ elements of the sliding window, requiring $w \log_2|\Gamma|$ bits of memory.}, both in terms of time and memory, at the cost of a minimal error. When $\log_2|\Gamma| > 2 \log_2 w$, it outperforms the exact solution described sooner in terms of memory.

\subsection{Short Hash Filter and Compact Short Hash Filter Algorithms}
\paragraph{Short Hash Filter (SHF)}
The approximate filter we described uses hashes of size $2\log_2(w)$ for a given sliding window $w$. However, this hash size is arbitrary, and while the current hash size guarantees a very low error rate, it can be changed. More importantly, in some practical cases the maximal amount of available memory is fixed beforehand. Fixing the memory is also more practical for benchmarking data structures, as it gives the guarantee that all filters operate under the same conditions.

This gives us the Short Hash Filter (SHF), described in Algorithm~\ref{alg:SHF}. The implementation relies on a double-ended queue or a ring buffer, which allows pushing at beginning of a queue and popping at the end in constant time. 

\begin{algorithm}
	\caption{SHF \textsf{Setup}, \textsf{Lookup} and \textsf{Insert}}\label{alg:SHF}
	
	\begin{algorithmic}[1]
	\Function{Setup}{$M, w$}
	\Comment $M$ is the available memory, $w$ the size of the sliding window
	\State $h \gets$ hash function of codomain size $\lfloor \frac M{2w} - \frac 12 \log_2 w \rfloor$
	\State $Q \gets \emptyset$ \Comment $Q$ is a queue of elements of size $h$
	\State $D \gets \emptyset$ \Comment $D$ is a dictionary $h \Rightarrow$  counter (of max value $w$)
	\EndFunction 
	\end{algorithmic}

	\begin{multicols}{2}
	\begin{algorithmic}[1]
		\Function{Insert}{$e$}
		\State $Q.\mathsf{Push\_Front}(h(e))$
		\State $D[h(e)]$++
		
		\If{$Q.\mathsf{length}() > w$}
		\State $h' \gets Q.\mathsf{Pop\_back}()$
		\State $D[h']$-{}-
		\If{$D[h'] = 0$}
		\State Erase key $D[h']$
		\EndIf
		\EndIf
		\EndFunction
	\end{algorithmic}	
		
	\begin{algorithmic}[1]
	\item[]
	\Function{Lookup}{$e$}
	\If{$D[h(e)] > 0$}
	\State \Return \DUPLICATE
	\Else
	\State \Return \UNSEEN
	\EndIf
	\EndFunction
	\item[]
\end{algorithmic}

	\end{multicols}
\end{algorithm}

\paragraph{Compact Short Hash Filter (CSHF)}
Removing the dictionary from the SHF construction yields a more memory-efficient, but less time-efficient construction, which we dub \enquote{compact} short hash filter (CSHF). The CSHF performs in linear time in $w$, and is a simple queue, the only point is that instead of storing $e$, the filter stores $h(e)$, where $h$ is a hash function of codomain size $\lfloor \frac Mw \rfloor$.

%
%
%
%

\paragraph{Error Probabilities.}
Let $w> 0$ be a window size and $M > 0$ the available memory.

We write $\FN^w_\text{SHF}$ the probability of false negative of an SHF with these parameters. We similarly define $\FP^w_\text{SHF}$, $\FN^w_\text{CSHF}$, $\FP^w_\text{CSHF}$. 

\begin{theorem}\label{thm:fprSHF}
	We have:
	\begin{itemize}
		\item $\FN_\text{SHF}^w  = 0$  and $\FP_\text{SHF}^w = 1 - \left(1 - \sqrt{w2^{-M/w}}\right)^w$
		\item$\FN_\text{CSHF}^w  = 0$  and $\FP_\text{CSHF}^w = 1 - \left(1 - 2^{-M/w}\right)^w$
	\end{itemize}	
\end{theorem}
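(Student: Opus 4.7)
The plan is to reduce both claims to a single birthday-style collision computation, applied to the hash codomain sizes determined by the memory budget in each filter's setup. I would structure the proof as: (i) determine the number of hash buckets each filter uses, (ii) establish that false negatives are impossible by construction, (iii) compute the false positive probability assuming the hash function behaves as a random oracle and the stream is uniform, and (iv) substitute the bucket counts.

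For step (i), I would unpack the \textsf{Setup} routine of Algorithm~\ref{alg:SHF}: the SHF codomain has $\lfloor M/(2w) - (1/2)\log_2 w\rfloor$ bits, so (ignoring the floor for readability, as is standard in this paper) there are $B_\text{SHF} = 2^{M/(2w)}/\sqrt w$ hash buckets. For CSHF the codomain has $\lfloor M/w\rfloor$ bits, giving $B_\text{CSHF}=2^{M/w}$ buckets. For step (ii), I would observe that once the filter is in its stationary regime, if $e^\star \wdup E$ then there is some $e_i$ in the current window with $e_i = e^\star$, hence $h(e_i)=h(e^\star)$. For SHF, the counter $D[h(e^\star)]$ is therefore strictly positive, so \textsc{Lookup} returns \textsf{DUPLICATE}; for CSHF the queue itself contains $h(e_i)=h(e^\star)$, giving the same verdict. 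Hence $\FN_\text{SHF}^w=\FN_\text{CSHF}^w=0$.

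For step (iii), I would fix an unseen $e^\star$, and the $w$ elements $e_{n-w+1},\dots,e_n$ currently stored (assumed distinct from each other and from $e^\star$, in line with the uniform-input assumption of Section~2). Modelling $h$ as a random oracle, the $w+1$ hash values are i.i.d.\ uniform on the $B$ buckets. The event $A_i=\{h(e^\star)=h(e_i)\}$ has probability $1/B$, and because $h(e^\star)$ is independent of $(h(e_{n-w+1}),\dots,h(e_n))$, one checks that the family $\{A_i\}_{i}$ is mutually independent (for any subset $S$, conditioning on $h(e^\star)$ and on all $h(e_j)$ for $j\notin S$ still leaves the $h(e_i)$ for $i\in S$ uniform and independent). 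Hence
\[
\FP^w = \Pr\!\left[\bigcup_{i=n-w+1}^{n} A_i\right] = 1 - \left(1-\tfrac1B\right)^w .
\]
For step (iv), plugging $B_\text{CSHF}=2^{M/w}$ gives the CSHF formula directly, and $1/B_\text{SHF}=\sqrt w\cdot 2^{-M/(2w)}=\sqrt{w\,2^{-M/w}}$ gives the SHF formula.

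The only subtle point, and the one I would highlight, is the independence justification in step (iii): a naive treatment via the union bound only yields an upper bound $w/B$, and a naive \enquote{each slot is occupied independently} story is wrong because the stored hashes are not uniform over the set of buckets (duplicates in the window can cause $h(e_i)=h(e_j)$). The clean argument relies on conditioning on $h(e^\star)$ and exploiting the random-oracle independence between $h(e^\star)$ and each individual $h(e_i)$; under the stated uniform-input hypothesis the window entries are a.s.\ distinct, so this conditioning really does produce independent Bernoulli$(1/B)$ indicators, and the product formula is exact rather than merely an approximation.
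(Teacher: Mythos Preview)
Your proof is correct and follows the same approach as the paper, which simply refers back to Theorem~\ref{thm:SHF} and substitutes the fingerprint sizes $h = M/(2w) - \tfrac12\log_2 w$ (SHF) and $h' = M/w$ (CSHF) into the collision formula $\FP = 1-(1-2^{-h})^w$. Your step-(iii) independence argument is in fact more careful than anything the paper spells out; the only minor slip is calling the window entries ``a.s.\ distinct'' when the alphabet is finite---it holds with high probability, which is all either argument actually needs.
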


\begin{proof}
	This is an immediate adaptation of the proof from Theorem~\ref{thm:SHF}. An SHF has fingerprints of size $h = \frac M{2w} - \frac 12 \log_2 w$, while a CSHF has fingerprints of size $h' = \frac Mw$.
\end{proof}

\paragraph{Remark:} A CSHF of size $M$ on a sliding window $w$ has the same error rate than an SHF of sliding window $w$ of size $2M + w\log_2 w$.

\paragraph{Saturation.}
SHF has strictly increasing error probabilities, which reach a threshold of $1/2$ for some maximum window size $w_\text{max}$. Beyond this value, these filters saturate extremely quickly: in other words, most SHF will either have an error rate of $0$ or $1$.

An illustration of this phenomenon can be seen in Figure~\ref{fig:shorts}, 
which shows the error rates for SHF with $M=10^5$, against 
a uniformly random stream of $18$-bit elements ($|\Gamma| = 2^{18}$). The benchmark used a finite stream of length $10^6$.

\begin{figure}[t]
	\centering
	\begin{tikzpicture}
\begin{semilogxaxis}[
	legend style={at={(0.02,0.98)},anchor=north west},%
	xlabel=$w$, ylabel=$\FPR^w+\FNR^w (\times 100)$]

\addplot[color=blue, mark=o] table[x=w,y=Error] {graphs/shf.dat};
\addlegendentry{SHF}

\addplot[color=red, mark=x] table[x=a,y=b] {graphs/cshf.dat};
\addlegendentry{CSHF}

\end{semilogxaxis}
\end{tikzpicture}
	\caption{Error rates of SHFs and CSHFs for $M = 10^5$ bits, for varying window sizes $w$.}\label{fig:shorts}
\end{figure}
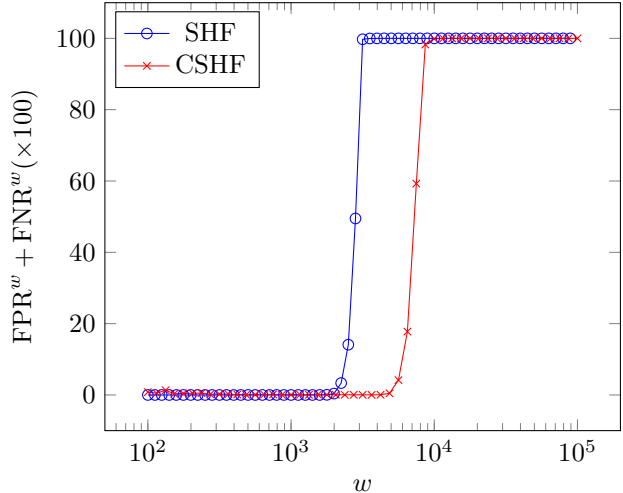

The value $w_\text{max}$ can be obtained by solving (numerically) for $\FP^{w_\text{max}} = 1/2$ for a given $M$. 
Experiments (numerical resolution of $\FP^\text{max} = 1/2$, for about 200 different values of $M$, uniformly distributed \emph{on a log scale} between $10^2$ and $10^6$) indicate an approximately linear relationship between $M$ and $w_\text{max}$: 
$w_\text{max}^\text{CSHF} = 0.0627 M + 443$ ($r^2 = 0.9981$) and
$w_\text{max}^\text{SHF} = 0.0233 M + 186$ ($r^2 = 0.9977$).


\section{Non-windowed DDFs in a wDDP setting}
\subsection{Lower Bound on the Saturation Resistance}
As said in the introduction, it has been proven \cite{10.1145/3297280.3297335} that all filters will reach saturation on the DDP setting. However, they sometimes prove to be efficient in some specific wDDP settings.
This bound is useful for several reasons, notably it provides an estimation of how close to optimality existing filters are.

\begin{theorem}\label{thm:asymptoticSaturation}
	Let $E$ be a stream of $n$ elements uniformly selected from an alphabet of size $|\Gamma|$. For any DDF using $M$ bits of memory, the error probability $EP_n = \FP_n + \FN_n$ satisfies
$$
	EP_n \geq 1 - \frac{1 - \left(1 - \frac{1}{|\Gamma|}\right)^{M}}{1 - \left(1 - \frac 1{|\Gamma|}\right)^n} 
$$
	for any $n > M$.
\end{theorem}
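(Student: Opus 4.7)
Let $A$ denote the event ``$e^\star$ is a duplicate in the stream'' and $B$ the event ``the filter outputs \DUPLICATE''. Using the definitions of $\FP_n$ and $\FN_n$ as conditional probabilities, we have
\[
EP_n = P(B \mid \bar A) + P(\bar B \mid A) = 1 - \bigl(P(B \mid A) - P(B \mid \bar A)\bigr),
\]
so the problem reduces to upper bounding the filter's distinguishing advantage $\delta := P(B\mid A) - P(B\mid \bar A)$ by $p_M/p_n$, where $p_k := 1 - (1-1/|\Gamma|)^k$; rearrangement then gives the claimed inequality.

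The denominator is straightforward: for a uniformly random stream of length $n$ over $\Gamma$, a direct counting argument yields $P(A) = p_n$. For the numerator, the elementary inequalities $P(B\mid A) = P(A\cap B)/P(A) \leq P(B)/P(A)$ and $P(B\mid \bar A) \geq 0$ give $\delta \leq P(B)/p_n$, so it suffices to show $P(B) \leq p_M$.

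The main step, and the primary obstacle, is establishing $P(B) \leq p_M$. For each of the at most $2^M$ reachable filter states $s$, let $D_s \subseteq \Gamma$ be the set of elements that would be classified as \DUPLICATE in state $s$, so that $P(B) = \mathbb{E}_s[|D_s|/|\Gamma|]$. The target quantity $p_M = 1 - (1-1/|\Gamma|)^M$ is precisely the probability that a uniform $e^\star$ coincides with at least one of $M$ uniformly drawn alphabet elements, which strongly suggests reducing any $M$-bit DDF to a canonical ``$M$-fingerprint'' filter that directly realises $P(B) = p_M$. The hard part is making this reduction universal: an adversarial encoding might yield individual states with $|D_s|$ arbitrarily large, so one probably needs either a rate-distortion-style argument (that $M$ bits of memory cannot losslessly encode more than $M$ coordinates of the distinct-element set $X \subseteq \Gamma$) or a case split handling large-$P(B)$ filters via the complementary bound $\delta \leq (1-P(B))/(1-p_n)$ to recover the uniform inequality $\delta \leq p_M/p_n$. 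Once this is secured, substitution into $EP_n = 1 - \delta$ gives the theorem.
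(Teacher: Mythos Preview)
Your algebraic reduction $EP_n = 1 - \delta$ with $\delta = P(B\mid A) - P(B\mid\bar A)$ and $P(A) = p_n$ is correct. The gap is at the step you yourself flag as ``the hard part''.

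The inequality $P(B) \leq p_M$ is simply false for general $M$-bit filters: the trivial filter that always outputs \DUPLICATE has $P(B)=1 > p_M$. You anticipate this, but neither proposed remedy closes the gap. The complementary bound $\delta \leq (1-P(B))/(1-p_n)$, combined with $\delta \leq P(B)/p_n$, uses the memory constraint \emph{nowhere}; the minimum of the two over $P(B)\in[0,1]$ is maximised at $P(B)=p_n$, where it equals~$1$, so no nontrivial bound on $\delta$ can follow from this pair alone. The rate-distortion sketch is too vague to produce a bound on $\delta$ (as opposed to $P(B)$), and there is no reason the \emph{optimal} filter should satisfy $P(B)\leq p_M$ either---so controlling $P(B)$ is the wrong intermediate target. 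The decomposition $\delta \leq P(B)/p_n$ throws away the coupling between $P(B\mid A)$ and $P(B\mid\bar A)$, and that coupling is precisely where the memory limitation bites.

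The paper takes a different route. It invokes an external result that an $M$-bit filter can perfectly retain at most $M$ stream elements, then argues (by symmetry of the uniform stream) that the error-minimising policy is to store the last $M$ elements and, when $e^\star$ is not among them, answer \DUPLICATE with some fixed probability~$p$. For this concrete filter one computes $\FP_n = p$ directly, and $\FN_n = (1-p)\bigl(1 - \Pr[C\mid D]\bigr)$ where $C$ is the event ``$e^\star$ occurs among the last $M$ elements'' and $D$ is ``$e^\star$ occurs among the last $n$''; since $C\subseteq D$ one gets $\Pr[C\mid D] = p_M/p_n$. Summing gives $EP_n = 1 - (1-p)\,p_M/p_n$, minimised at $p=0$. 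The memory constraint thus enters once, in identifying the optimal strategy, rather than through any inequality on $P(B)$.
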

In particular, the asymptotic error rate $EP_\infty$ satisfies 
$$
	EP_\infty \geq \left(1 - \frac{1}{|\Gamma|}\right)^{M} \approx 1 - M/|\Gamma|.
$$

\begin{proof}
	By definition, a perfect filter has the lowest possible error rate.
	With $M$ bits of memory, a perfect filter can store at most $M$ elements in memory \cite[Theorem 2.1]{10.1145/3297280.3297335}. Up to reordering the stream, without loss of generality because it is random, we may assume that the filter stores the $M$ last elements of the stream: any other strategy cannot yield a strictly lower error rate.
	
	If an element is already stored in the filter, then the optimal filter will necessarily answer \DUPLICATE. On the other hand, if the element is not in memory, a perfect filter can choose to answer randomly. Let $p$ be the probability that a filter answers \DUPLICATE when an element is not in memory. An optimal filter will lower the error rate of any filter using the same strategy with a different probability.
	
	An unseen element, by definition, will be unseen in the $M$ last elements of the stream, and hence will not be in the filter's memory, so the filter will return \UNSEEN with probability $1-p$. For this reason, this filter has an FP probability of $p$.
	
	On the other hand, a duplicate $e^\star \in E$ is classified as \UNSEEN if and only if it was not seen in the last $M$ elements of the stream, and the filter answers \UNSEEN. Let $D$ be the event \enquote{\emph{There is at least one duplicate in the stream}} and $C$ be the event \enquote{\emph{There is a duplicate of $e^\star$ in the $M$ previous elements of the stream}}. Then $e^\star$ triggers a false negative with probability
	\begin{align*}
	\FN_n &= (1 - \Pr[C | D])(1-p) \\ &=  \left(1 - \frac{\Pr[C \cap D]}{\Pr[D]}\right)(1-p)\\
	&= \left(1 - \frac{\Pr[C]}{\Pr[D]}\right)(1-p) \\
	&=  \left(1 - \frac{1 - \Pr[\bar C]}{1 - \Pr[\bar D]}\right)(1-p)\\
	\FN_n &=  \left(1 - \frac{1 - \left(1 - \frac 1{|\Gamma|}\right)^M}{1 - \left(1 - \frac 1{|\Gamma|}\right)^n}\right)(1-p)
	\end{align*}
	Hence, the error probability of the filter is
	\begin{align*}
	EP_n 
	& = \FN_n + p \\
	&=  \left(1 - \frac{1 - \left(1 - \frac{1}{|\Gamma|}\right)^{M}}{1 - \left(1 - \frac 1{|\Gamma|}\right)^n}\right)(1-p) + p\\
	EP_n & = 1 - \frac{1-\left(1 - \frac{1}{|\Gamma|}\right)^{M}}{1 - \left(1 - \frac 1{|\Gamma|}\right)^n}(1-p),
	\end{align*}
	which is minimized when $p = 0$. 
	\end{proof}

Note, as highlighted in the proof, that this bound is \emph{not tight}: better bounds may exist, the study of which we leave as an open question for future work.

\subsection{Saturation Resistance of DDFs}\label{sub:saturation}
We now evaluate the saturation rate for several DDFs, in the original DDP setting (without sliding window). Parameters are chosen to yield equivalent memory footprints and were taken from \cite{10.1145/3297280.3297335}, namely:
\begin{itemize}
	\item QHT \cite{10.1145/3297280.3297335}, 1 bucket per row, 3 bits per fingerprint;
	\item SQF \cite{Dut13}, 1 bucket per row, $r = 2$ and $r' = 1$;
	\item Cuckoo Filter \cite{Fan14}, cells containing 1 element of 3 bits each;
	\item Stable Bloom Filter (SBF) \cite{Den06}, 2 bits per cell, 2 hash functions, targeted FPR of 0.02.
\end{itemize}

These filters are run against a stream of uniformly sampled elements from an alphabet of $2^{26}$ elements. This results in around 8\% duplicates amongst the 150 000 000 elements in the longest stream used. 
Results are plotted in Figure~\ref{fig:graph_n}.

\begin{figure}[t]
	\centering
	\begin{tikzpicture}
\begin{axis}[
	xmode=log,
	xmin = 500,
	xmax = 200000000,
	xlabel = {Size of the stream},
	ymax = 110,
    ymin = 0,
    ytick = {0, 20, ..., 120},
    ylabel = {$\FPR^\infty + \FNR^\infty (\times 100)$},
	legend style={at={(0.02,0.98)},anchor=north west,nodes={scale=0.87, transform shape}},%
]
\addplot [color=red, mark=+] coordinates {
	(1000, 0)
	(10000, 0.28)
	(30000, 0.67)
	(50000, 5.55)
	(100000, 8.33)
	(300000, 24.07)
	(1000000, 56.52)
	(3000000, 81.41)
	(10000000, 93.74)
	(50000000, 98.83)
	(100000000, 99.03)
	(150000000, 99.21)
};
\addlegendentry{QHT}

\addplot[color=blue, mark=x] coordinates {
	(1000, 0)
	(10000, 0.49)
	(30000, 1.38)
	(50000, 6.67)
	(100000, 10.4)
	(300000, 33.54)
	(1000000, 64.9)
	(3000000, 85.91)
	(10000000, 95.24)
	(50000000, 98.44)
	(100000000, 99.27)
	(150000000, 99.40)
};
\addlegendentry{SQF}

\addplot [color=teal, mark=triangle] coordinates {
	(1000, 0.2)
	(10000, 2.48)
	(30000, 6.63)
	(50000, 19.81)
	(100000, 28.8)
	(300000, 72.17)
	(1000000, 90.88)
	(3000000, 96.28)
	(10000000, 98.95)
	(50000000, 99.75)
	(100000000, 99.84)
	(150000000, 99.86)
};
\addlegendentry{Cuckoo}

\addplot [color=olive, mark=square, mark options = {scale=0.7}]coordinates {
	(1000, 0.3)
	(10000, 4.07)
	(30000, 10.18)
	(50000, 14.63)
	(100000, 46.64)
	(300000, 77.46)
	(1000000, 92.66)
	(3000000, 97.22)
	(10000000, 99.19)
	(50000000, 99.79)
	(100000000, 99.88)
	(150000000, 99.90)
};
\addlegendentry{SBF}

%
%

\addplot[black, pattern=north east lines wide] coordinates
{
	(1000000, 0)
	(2000000, 49.63)
	(3000000, 66.16)
	(5000000, 79.40)
	(10000000, 89.31)
	(30000000, 95.90)
	(50000000, 97.18)
	(100000000, 98.09)
	(150000000, 98.34)
	(300000000, 98.50)
} \closedcycle; 

\end{axis}
	\end{tikzpicture}
	\caption{Error rate (times 100) of DDFs of 1Mb as a function of stream length. Hatched area represents over-optimal (impossible) values.}\label{fig:graph_n}
\end{figure}
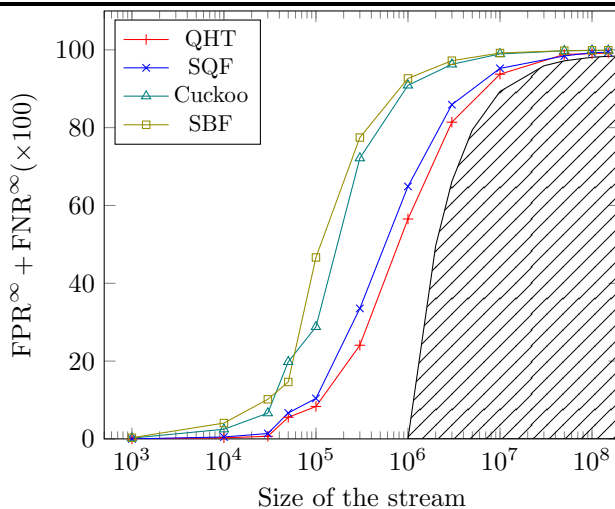

The best results are given by the following filters, in order: QHT, SQF, Cuckoo and SBF. We also observe that QHT and SQF have error rates relatively close to the lower bound, hence suggesting that these filters are close to optimality, especially since the lower bound is not tight.

\subsection{Performance in wDDP}
We now consider the performance of the filters just discussed in the \emph{windowed} setting, for which they were \emph{not} designed. In particular, it is not possible to adjust their parameters as a function of $w$.

Remarkably, some of these filters still outperform dedicated windowed filters for some window sizes at least, as shown in Figure~\ref{fig:nwddf-in-wddp}. In this benchmark, we used the following filters:
\begin{itemize}
	\item block decaying Bloom Filter\footnote{Note that by design, a b\_DBF of $10^5$ bits cannot operate for $w > 6000$.} (b\_DBF) \cite{She08}, sliding window of size $w$
	\item A2 filter \cite{Yoo10}, changing subfilter every $w/2$ insertions
	\item QHT \cite{10.1145/3297280.3297335}, 1 bucket per row, $3$ bits per fingerprint
\end{itemize}

Nevertheless, we will now discuss the queuing construction, which allow us to build windowed filters from the DDP filters.  

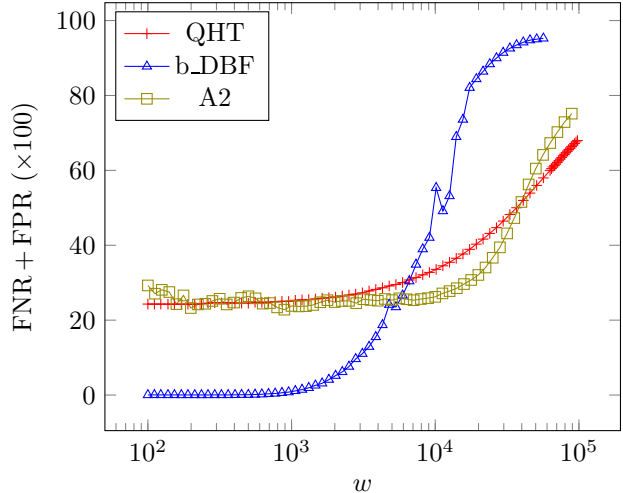
\begin{figure}[t]
	\centering
			\begin{tikzpicture}
	\begin{semilogxaxis}[xlabel = $w$, ylabel = $\FNR + \FPR$ $(\times 100)$, 
	legend style={at={(0.02,0.98)},anchor=north west}]

	\addplot[color=red,mark=+] table[x=w,y=Error] {graphs/b_qht.dat};
\addlegendentry{QHT}

	\addplot[color=blue,mark=triangle] table[x=w,y=Error] {graphs/b_bdbf.dat};
\addlegendentry{b\_DBF}

	\addplot[color=olive,mark=square] table[x=w,y=Error] {graphs/b_a22.dat};
\addlegendentry{A2}
	\end{semilogxaxis}
	\end{tikzpicture}
	\caption{Error rates for QHT, b\_DBF, and A2. While A2 and b\_DBF were designed and adjusted to the wDDP, this is not the case of QHT. Still, QHT outperforms these filters for some values of $w$.}
	\label{fig:nwddf-in-wddp}
\end{figure}

\section{Queuing filters}\label{sec:queue}
We now describe the queuing construction, which produces a sliding window DDF from any DDF. We first give the description of the setup, before studying the theoretical error rates. 
A scheme describing our structure is detailed in Figure~\ref{fig:scheme}.

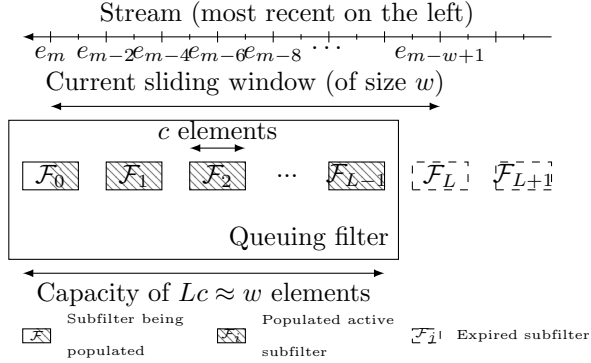
\begin{figure}[t]
	\centering
	\begin{tikzpicture}[scale=0.37]
	
	\draw[pattern=north west lines, pattern color=gray, draw=none] (1, 0) rectangle (2, 1);
	\draw (0, 0) rectangle (2, 1) node[midway] {$\mathcal F_0$};

	\draw[pattern=north west lines, pattern color=gray] (3, 0) rectangle (5, 1) node[midway] {$\mathcal F_1$};
	\draw[pattern=north west lines, pattern color=gray] (6, 0) rectangle (8, 1) node[midway] {$\mathcal F_2$};
	\node at (9.5, 0.5) {...};
	\draw[pattern=north west lines, pattern color=gray] (11, 0) rectangle (13, 1) node[midway] {$\mathcal F_{L-1}$};
	
	\draw[dashed] (14, 0) rectangle (16, 1) node[midway] {$\mathcal F_{L}$};
	\draw[dashed] (17, 0) rectangle (19, 1) node[midway] {$\mathcal F_{L+1}$};
	
	\draw [<->,>=latex] (6, 1.5) -- (8, 1.5) node[above, midway] {$c$ elements};
	\draw [<->,>=latex] (0, -3) -- (13, -3) node[below, midway] {Capacity of $Lc \approx w$ elements};
	\draw [<->,>=latex] (1, 3) -- (15, 3) node[above, midway] {Current sliding window (of size $w$)};
	
	\draw [<-, >=latex] (0, 5.5) -- (19, 5.5) node[above, midway] {Stream (most recent on the left)};
	\foreach \x in {1, 3,...,18}
	\draw (\x, 5.7) -- node[pos=0.5] (point\x) {} (\x, 5.3);
	
	\foreach \x in {2, 4,...,18}
	\draw (\x, 5.6) -- node[pos=0.5] (point\x) {} (\x, 5.4);

	\path (point1) node [below] {$e_m$};
	\foreach \x [evaluate=\x as \i using int(\x-1)] in {3, 5,..., 9}
	\path (point\x) node [below] {$e_{m-\i}$};
	\path (point11) node [below] {$\dots$};
	\path (point15) node [below] {$e_{m - w + 1}$};
	
	\draw (-0.5, 2.5) rectangle (13.5, -2.5) node[anchor=south east] {Queuing filter};
	
	\draw[draw=none, pattern=north west lines, pattern color=gray] (0.5, -5) rectangle (1, -5.5);
	\draw (0, -5) rectangle (1, -5.5) node[midway] {\tiny $\mathcal F$};
	\draw (1.25, -5.25) node[anchor=west,align=left] {\tiny Subfilter being\\ \tiny populated};
	
	\draw[pattern=north west lines, pattern color=gray] (7, -5) rectangle (8, -5.5) node[midway] {\tiny $\mathcal F_i$};
	\draw (8.25, -5.25) node[anchor=west,align=left] {\tiny Populated active\\\tiny subfilter};
	
	\draw[dashed] (14, -5) rectangle (15, -5.5) node[midway] {\tiny $\mathcal F_j$};
	\draw (1 5.25, -5.25) node[anchor=west] {\tiny Expired subfilter};
	\end{tikzpicture}
	\caption{Architecture of the queuing filter, which consists of $L$ subfilters $\mathcal F_i$, each containing up to $c$ elements. Once the newest subfilter has inserted $c$ elements in its structure, the oldest one expires. As such, the latter is dropped and a new one is created and put under population at the beginning of the queue. In this example, the sub-sliding window of $\mathcal F_1$ is $(e_{m-2}, e_{m-3}, e_{m-4})$.} \label{fig:scheme}
\end{figure}

\subsection{The queuing construction}

\paragraph{Principle of operation.}
Let $\mathcal F$ be a DDF. Rather than allocating the whole memory to $\mathcal F$, we will create $L$ copies of $\mathcal F$, each using a fraction of the available memory. Each of these \emph{subfilters} has a limited timespan, and is allowed up to $c$ insertions. The subfilters are organised in a queue.

When inserting a new element in the queuing filter, it is inserted in the topmost subfilter of the queue. After $c$ insertions, a new empty filter is added to the queue, and the oldest subfilter is popped and erased.

As such, we can consider that each subfilter operates on a sub-sliding window of size $c$, which makes the overall construction a DDF operating over a sliding window of size $w = cL$.

\paragraph{Insertion and lookup.}
The filter returns \DUPLICATE if and only if at least one subfilter does. Insertion is a simple insertion in the topmost subfilter.

\paragraph{Queue update.}
After $c$ insertions, the last filter of the queue is dropped, and a new (empty) filter is appended in front of the queue.

\paragraph{Pseudocode.}
We give a brief pseudocode for the queuing filter's functions \textsf{Lookup} and \textsf{Insert}, as well as a \textsf{Setup} function for initialisation, in Algorithm~\ref{alg:queuing}. We introduced for simplicity a constructor $\mathcal F.\textsf{Setup}$ that takes as input an integer $M$ and outputs an initialized empty filter $\mathcal F$ of size at most $M$. Here \texttt{subfilters} is a FIFO that has a \texttt{pop} and \texttt{push\_first} operations, which respectevely removes the last element in the queue or inserts a new item in first position.

\begin{algorithm}[]
	\caption{Queuing Filter \textsf{Setup}, \textsf{Lookup} and \textsf{Insert}}\label{alg:queuing}
	
	\begin{algorithmic}[1]
	\Function{Setup}{$\mathcal F, M, L, c$}
		\Comment $M$ is the available memory, $\mathcal F$ the subfilter structure, $L$ the number of subfilters and $c$ the number of insertions per subfilter
		\State subfilters $\gets \emptyset$
		\State counter $\gets 0$
		\State $m \gets \lfloor M/L \rfloor$
		\For{$i$ from $0$ to $L-1$}
			\State subfilters.push\_first$(\mathcal F.\mathsf{Setup}(m))$
		\EndFor
		\State \textbf{store} (subfilters, $L$, $m$, counter)
	\EndFunction 
	\end{algorithmic}

	\begin{multicols}{2}
	\begin{algorithmic}[1]
	\Function{Lookup}{$e$}
		\For{$i$ from $0$ to $L-1$}
			\If{subfilters[i]$.\mathsf{Lookup(e)}$}
				\State \Return $\DUPLICATE$			
			\EndIf
		\EndFor
		\State \Return \UNSEEN
	\EndFunction
	\item[]
	\end{algorithmic}

	\begin{algorithmic}[1]
	\Function{Insert}{$e$}
		\State subfilters[0]$.\mathsf{Insert}(e)$
		\State counter$++$
		\If{counter == $c$}
			\State subfilters.pop()
			\State subfilters.push\_first($\mathcal F.\textsf{Setup}(m)$)
		\EndIf
	\EndFunction
	\end{algorithmic}
	\end{multicols}
\end{algorithm}

\subsection{Error rate analysis}
The queuing filter's properties can be derived from the subfilters'. False positive and false negative rates are of particular interest. In this section we consider a queuing filter $\mathcal Q$ with $L$ subfilters of type $\mathcal F$ and 
capacity $c$ (which means that the last subfilter is dropped after $c$ insertions).

\paragraph{Remark.}
	\label{subsub:number_elements}
	By definition, after $c$ insertions the last subfilter is dropped.
Information-theoretically, this means that all the information related to the elements inserted in that subfilter has been lost, and there are $c$ such elements by design. 
Therefore, in the steady-state regime, the queuing filter holds information about at least $c(L-1)$  elements (immediately after deleting the last subfilter) and at most $cL$ elements (immediately before this deletion). 

Hence, if $w < cL$, the queuing filter can hold information about \emph{more than $w$ elements}.

\subsubsection{False Positive Probability}

\begin{theorem}
	\label{thm:queue-pfp}
	 Let $\FP_{\mathcal Q, m}^w$ be the false positive probability
	 of $\mathcal Q$ after $m > w$ insertions, over a sliding window of size $w = cL$, we have \\
	$$
		\label{eq:queue-pfp}
		\FP_{\mathcal Q, m}^w = 1 - \left(1 - \FP_{\mathcal F,c}\right)^{L-1} \left(1 - \FP_{\mathcal F, m \bmod c }\right)
	$$ 
	where 
	 $\FP_{\mathcal F,m}$ is the false positive probability of a subfilter $\mathcal F$ after $m$ insertions.
\end{theorem}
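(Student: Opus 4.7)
The plan is to reduce the queuing filter's false positive probability to the product of the subfilters' probabilities by exploiting the fact that the queuing filter outputs \DUPLICATE if and only if at least one of its $L$ subfilters does. Consequently, for a query $e^\star$ that is truly unseen in the sliding window of size $w=cL$, a false positive at the queue level occurs iff at least one subfilter reports a false positive. Passing to complements, the probability of a correct \UNSEEN answer is the probability that \emph{every} subfilter correctly returns \UNSEEN.

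I would first justify that we are entitled to take an independence assumption across subfilters. Each subfilter is an independent instance of $\mathcal F$ (built via $\mathcal F.\mathsf{Setup}$) and hence uses independent internal randomness (hash seeds, etc.); given that $e^\star$ is unseen in the whole window, the events ``subfilter $i$ returns \DUPLICATE on $e^\star$'' are therefore mutually independent. I would state this explicitly as the standing assumption under which the bound holds.

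Next I would account for how many insertions each subfilter has absorbed in the steady-state regime $m > w$. By the queue update rule (Algorithm~\ref{alg:queuing}), immediately after $m$ global insertions the topmost subfilter has received exactly $m \bmod c$ insertions, while the remaining $L-1$ subfilters are each at full capacity $c$ (the oldest one having been evicted at each rotation). Since $e^\star$ is unseen in the window of size $w = cL$, it is in particular unseen among the elements held by each individual subfilter; the false positive probability contributed by a subfilter that has absorbed $k$ insertions is thus exactly $\FP_{\mathcal F, k}$ by definition.

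Combining these, the probability that all $L$ subfilters return \UNSEEN is
\[
(1 - \FP_{\mathcal F, c})^{L-1}\bigl(1 - \FP_{\mathcal F, m \bmod c}\bigr),
\]
and taking one minus this gives the claimed formula. The main obstacle is really the independence assumption — it is clean if we assume independently seeded subfilters, but if one wanted to be fully rigorous about correlations (e.g.\ shared hash families), one would need either to invoke pairwise independence of the hash seeds or to reformulate the bound as a union-bound style inequality. The cases $m \bmod c = 0$ and the transient regime $m \leq w$ are minor edge cases: the former is handled by the convention $\FP_{\mathcal F, 0} = 0$ (an empty subfilter never triggers), and the latter is explicitly excluded by the hypothesis $m > w$.
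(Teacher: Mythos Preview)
Your argument is correct and follows essentially the same route as the paper: complement the event, observe that a correct \UNSEEN answer requires every subfilter to return \UNSEEN, note that $L-1$ subfilters are full with $c$ insertions while the topmost has $m \bmod c$, and multiply. If anything you are more careful than the paper, which takes the independence of the subfilters' responses for granted without comment.
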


\begin{proof}
	Let $E = (e_1, \dotsc, e_m, \dotsc)$ be a stream and $e^\star \notwdup E$. 
	
	Therefore, $e^\star$ is a false positive if and only if at least one subquery $\mathcal F_i.\mathsf{Lookup(e^\star)}$ returns \DUPLICATE. Conversely, $e^\star$ is \emph{not} a false positive when all subqueries $\mathcal F_i.\mathsf{Lookup(e^\star)}$ return \UNSEEN, i.e., when $e^\star$ is not a false positive for each subfilter.
	
	Each subfilter has undergone $c$ insertions, except for the first subfilter which has only undergone $m \bmod c$, we immediately get Eq.~(\ref{eq:queue-pfp}).
\end{proof}

\paragraph{Remark.} In the case $w < cL$, as mentioned previously, there is a non-zero probability that $e^\star$ is in the last subfilter's memory, despite not belonging to the sliding window. 
	
Assuming a uniformly random input stream, and writing $\delta = cL - w$, the probability that $e^\star$ has occurred in $\{e_{m-cL}, \dotsc e_{m-w+1}\}$ is $1-\left(1-\frac1{|\Gamma|}\right)^\delta$. For large $|\Gamma|$ (as is expected to be the case in most applications), this probability is about $\frac{\delta}{|\Gamma|} \ll 1$. Hence, we can neglect the probability that $e^\star$ is present in the filter, and we consider the result of Theorem~\ref{thm:queue-pfp} to be a very good approximation even when $w < cL$.

\subsubsection{False Negative Probability}\label{sub:fnr}
\begin{theorem}
	\label{thm:queue-pfn}
	Let $\FN_{\mathcal Q, m}^w$ be the false negative probability of $\mathcal Q$ after $m > w$ insertions on a sliding window of size $w = cL$, we have
	$$
	\FN_{\mathcal Q, m}^w 
	=  u_c^{L-1} u_{m \bmod c}
	$$
	where 
	we have introduced the short-hand notation $u_\eta= p_{\eta}\FN_{\mathcal F,\eta} + \left(1-p_{\eta}\right)\left(1-\FP_{\mathcal F,\eta}\right)$ where 
	$\FN_{\mathcal F,\eta}$ (resp. $\FP_{\mathcal F, \eta}$) is the false negative probability (resp. false positive) of the subfilter $\mathcal F$ after $\eta$ insertions, and
	$
		p_\eta = \frac{1-\left(1-\frac 1{|\Gamma|}\right)^\eta}{1-\left(1-\frac 1{|\Gamma|}\right)^w} \approx \frac{\eta}{w}.
	$
\end{theorem}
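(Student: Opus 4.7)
The plan is to reduce the queuing filter's behaviour on a single duplicate query $e^\star$ to a product over its $L$ subfilters, and then compute each factor by conditioning on whether $e^\star$ actually occurs in the sub-window handled by that subfilter.

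First I would observe that, by the lookup rule, the queuing filter returns $\UNSEEN$ on $e^\star$ if and only if every subfilter $\mathcal{F}_i$ returns $\UNSEEN$. Writing $U_i$ for that event, the quantity of interest is therefore
\[
\FN_{\mathcal{Q},m}^w \;=\; \Pr\!\left[\,\bigcap_{i=0}^{L-1} U_i \;\Big|\; e^\star \wdup E\,\right].
\]
Each subfilter $\mathcal{F}_i$ has been exposed to a disjoint segment of the stream of length $\eta_i$, where $\eta_i=c$ for the $L-1$ fully populated subfilters and $\eta_0 = m \bmod c$ for the subfilter currently being filled.

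Next, for a single subfilter of window length $\eta$, I introduce the auxiliary event $A$ = \enquote{$e^\star$ appears somewhere in the $\eta$ stream positions covered by this subfilter}. Under the uniform-random stream model, $\Pr[A] = 1-\left(1-\tfrac{1}{|\Gamma|}\right)^{\eta}$, and since $A$ implies $e^\star \wdup E$, Bayes' rule gives $\Pr[A \mid e^\star \wdup E] = p_\eta$. Conditioning on $A$ versus $\bar A$ and using the definitions of $\FN_{\mathcal{F},\eta}$ and $\FP_{\mathcal{F},\eta}$ (the subfilter is in its steady state with $\eta$ insertions), I get
\[
\Pr[U_i \mid e^\star \wdup E] \;=\; p_{\eta_i}\,\FN_{\mathcal{F},\eta_i} \;+\; (1-p_{\eta_i})\,(1-\FP_{\mathcal{F},\eta_i}) \;=\; u_{\eta_i}.
\]

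Finally I would factor the joint probability across subfilters. Since each $U_i$ depends only on the disjoint stream segment handled by $\mathcal{F}_i$ and on that subfilter's internal randomness, the events $U_i$ are unconditionally independent; expanding the conditional joint probability as a sum over which subset $S$ of subfilters has $e^\star$ in its segment, and using the independence of the $A_i$'s within each term of the sum, the sum telescopes into the product $\prod_i \bigl[p_{\eta_i}\FN_{\mathcal{F},\eta_i} + (1-p_{\eta_i})(1-\FP_{\mathcal{F},\eta_i})\bigr] = \prod_i u_{\eta_i}$. Plugging in the values of $\eta_i$ then yields $u_c^{L-1}\,u_{m \bmod c}$, as claimed.

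The main obstacle is the last step: strictly speaking, conditioning on the union $\bigcup_i A_i = \{e^\star \wdup E\}$ couples the $A_i$'s, so the factorisation is not exact but rests on treating the conditional occurrences of $e^\star$ across disjoint sub-windows as independent (equivalently, using the approximation $p_\eta \approx \eta/w$ and summing contributions by subset $S$). I would address this by carrying out the subset expansion explicitly and checking that, with $p_\eta$ defined as above, the cross-terms reorganise into the advertised product; the approximation is benign in the regime $|\Gamma| \gg w$ that this paper consistently assumes.
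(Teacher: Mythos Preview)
Your proposal is correct and follows essentially the same route as the paper: reduce the false-negative event to ``all subfilters answer \UNSEEN'', then for each subfilter condition on whether $e^\star$ lies in its sub-sliding window, obtain the mixture $u_\eta$, and multiply across the $L$ subfilters. The paper's proof is in fact terser than yours---it computes $p_c$ by the same Bayes argument and then implicitly takes the product, without discussing the factorisation step at all---so your explicit treatment of the independence issue (and the honest remark that conditioning on $\bigcup_i A_i$ couples the $A_i$'s, making the product an approximation in the regime $|\Gamma|\gg w$) is a genuine improvement in rigour over the original.
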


\begin{proof}
	Let $E = (e_1, \dotsc, e_m, \dotsc)$ be a stream, let $w$ be a sliding window and let $e^\star \wdup E$. 
	
	Then $e^\star$ is a false negative if and only if all subfilters $\mathcal F_i$ answer $\mathcal F_i.\mathsf{Detect}(e^\star) = \UNSEEN$. There can be two cases:
	\begin{itemize}
		\item $e^\star$ is present in $\mathcal F_i$'s sub-sliding window;
		\item $e^\star$ is not present in $\mathcal F_i$'s sub-sliding window.
	\end{itemize}
	In the first case, $\mathcal F_i.\mathsf{Detect}(e^\star)$ returns \UNSEEN if and only if $e^\star$ is a false negative for $\mathcal F_i$. This happens with probability $\FN_{\mathcal F,c}$ by definition, except for $\mathcal F_0$, for which the probability is $\FN_{\mathcal F, m\bmod c}$.
	
	In the second case, $\mathcal F_i.\mathsf{Detect}(e^\star)$ returns \UNSEEN if and only if $e^\star$ is not a false positive for $\mathcal F_i$, which happens with probability $1-\FP_{\mathcal F,c}$, execpt for $\mathcal F_0$, for which the probability is $1-\FP_{\mathcal F, m\bmod c}$.
	
	Finally, each event is weighted by the probability $p_c$ that $e^\star$ is in $\mathcal F_i$'s sub-sliding window:
	\begin{align*}
	p_c 
	& = \Pr[\text{$e^\star$ is in }\mathcal F_i \text{ sub-sliding window | } e^\star \wdup E]\\
	& = \frac{\Pr[\text{$e^\star$ is in }\mathcal F_i \text{ sub-sliding window } \cap e^\star \wdup E]}{\Pr[e^\star \wdup E]}\\
	& = \frac{\Pr[\text{$e^\star$ is in }\mathcal F_i \text{ sub-sliding window}]}{\Pr[e^\star \wdup E]}\\
	& = \frac{1 - \Pr[\text{$e^\star$ is not in }\mathcal F_i \text{ sub-sliding window }]}{1 - \Pr[e^\star \notwdup E]}\\
	p_c& = \frac{1 - \left(1 - \frac 1{|\Gamma|}\right)^c}{1 - \left(1-\frac{1}{|\Gamma|}\right)^w}
	\end{align*}
	This concludes the proof.
\end{proof}

\paragraph{Remark.} As previously, the effect of $w < cL$ is negligible for all practical purposes and Theorem~\ref{thm:queue-pfn} is considered a good approximation in that regime. 

\subsection{FNR and FPR}
From the above expressions we can derive relatively compact explicit formulas for the queuing filter's FPR and FNR when $m = cn$ for $n$ a positive integer.
\begin{theorem}
	Let $\FPR_{\mathcal Q, m}^w$ be the false positive rate of $\mathcal Q$ after $m=cn > w$ insertions on a sliding window of size $w = cL$, we have
	\begin{equation*}
	\FPR_{\mathcal Q, cn}^w = 1 -  \frac{(1 - \FP_{\mathcal F, c})^{L-1}}{c}\sum_{\ell = 0}^{c-1} (1-\FP_{\mathcal F, \ell}).
	\end{equation*}
\end{theorem}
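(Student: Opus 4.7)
The plan is to derive the formula directly from Theorem~\ref{thm:queue-pfp} by averaging the false positive probability over one full period of length $c$, using the definition of the false positive rate from the footnote in Section~\ref{sec:dudefisa}, namely $\FPR^w_N = \frac{1}{N}\sum_{i=1}^{N}\FP^w_i$.

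First I would apply this definition with $N = cn$, so that
\begin{equation*}
\FPR^w_{\mathcal Q, cn} = \frac{1}{cn}\sum_{i=1}^{cn}\FP^w_{\mathcal Q, i}.
\end{equation*}
Next I would substitute the closed form of $\FP^w_{\mathcal Q, i}$ given by Theorem~\ref{thm:queue-pfp}, obtaining
\begin{equation*}
\FPR^w_{\mathcal Q, cn} = 1 - \frac{(1-\FP_{\mathcal F,c})^{L-1}}{cn}\sum_{i=1}^{cn}\bigl(1-\FP_{\mathcal F, i \bmod c}\bigr),
\end{equation*}
since the factor $(1-\FP_{\mathcal F,c})^{L-1}$ depends only on the fixed subfilter capacity and can be pulled out of the sum.

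The crucial observation is then that as $i$ ranges over $\{1,2,\dotsc,cn\}$, the residue $i \bmod c$ cycles through each element of $\{0,1,\dotsc,c-1\}$ exactly $n$ times. Hence
\begin{equation*}
\sum_{i=1}^{cn}\bigl(1-\FP_{\mathcal F, i \bmod c}\bigr) = n\sum_{\ell=0}^{c-1}\bigl(1-\FP_{\mathcal F, \ell}\bigr),
\end{equation*}
and the factor of $n$ cancels the $n$ in the denominator, yielding the stated formula.

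There is essentially no technical obstacle here; the argument is purely combinatorial once Theorem~\ref{thm:queue-pfp} is in hand. The only delicate point is that Theorem~\ref{thm:queue-pfp} is stated for $m > w$, whereas the sum formally starts at $i = 1$; however, this is an initial transient that contributes a vanishing correction of order $w/(cn)$ to the average and is absorbed in the steady-state description (equivalently, one may let $n$ be large enough that the window-fill phase is negligible, which matches the asymptotic interpretation used throughout the paper).
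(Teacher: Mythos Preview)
Your proof is correct and follows essentially the same approach as the paper's: both start from the definition $\FPR^w_{\mathcal Q,cn}=\frac{1}{cn}\sum_{i=1}^{cn}\FP^w_{\mathcal Q,i}$, invoke Theorem~\ref{thm:queue-pfp}, and exploit the $c$-periodicity of $i\mapsto \FP_{\mathcal F,i\bmod c}$ to collapse the sum. Your observation about the $m>w$ hypothesis and the initial transient is a nice extra remark that the paper's proof simply glosses over.
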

\begin{proof}
	
	\begin{align*}
	\FPR_{\mathcal Q, cn}^w
	& = \frac{1}{cn} \sum_{k=1}^{cn} \FP_{\mathcal Q, k}^w 
	= \frac{1}{cn} \sum_{k=1}^{n} \sum_{\ell = 0}^{c-1} \FP_{\mathcal Q, k + \ell}^w 
	= \frac1c \sum_{\ell = 0}^{c-1} \FP_{\mathcal Q, \ell}^w \\
	& = \frac1c \sum_{\ell = 0}^{c-1}  1 - (1 - \FP_{\mathcal F, c})^{L-1}(1 - \FP_{\mathcal F}, \ell) \\
	& = 1 - \frac1c(1 - \FP_{\mathcal F, c})^{L-1}\sum_{\ell = 0}^{c-1} (1-\FP_{\mathcal F, \ell})
	\end{align*}
\end{proof}

\begin{theorem}
	Let $\FNR_{\mathcal Q, m}^w$ be the false negative rate of $\mathcal Q$ after $m=cn > w$ insertions on a sliding window of size $w = cL$, we have
	\begin{equation*}
	\FNR_{\mathcal Q, cn}^w = \frac{u_c^{L-1}}{c}  \sum_{\ell = 0}^{c-1}u_\ell.
	\end{equation*}
\end{theorem}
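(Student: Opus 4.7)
The plan is to mirror the preceding FPR proof almost verbatim, since both reduce to averaging a periodic-in-$m$ expression. First I would unfold the definition of the false negative rate as the Cesàro average $\FNR_{\mathcal Q, cn}^w = \frac{1}{cn}\sum_{k=1}^{cn} \FN_{\mathcal Q, k}^w$ (using the footnote identity from the Notations section). Then I would substitute the closed form $\FN_{\mathcal Q, k}^w = u_c^{L-1}\, u_{k \bmod c}$ given by Theorem~\ref{thm:queue-pfn}.

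The key observation is that the factor $u_c^{L-1}$ is independent of $k$, so it pulls out of the sum. What remains inside depends only on $k \bmod c$, which is periodic with period $c$. Since the range $k = 1, \dotsc, cn$ contains exactly $n$ complete periods, each residue $\ell \in \{0, 1, \dotsc, c-1\}$ is hit exactly $n$ times. Thus
\begin{equation*}
\FNR_{\mathcal Q, cn}^w = \frac{u_c^{L-1}}{cn}\sum_{k=1}^{cn} u_{k \bmod c} = \frac{u_c^{L-1}}{cn} \cdot n \sum_{\ell=0}^{c-1} u_\ell = \frac{u_c^{L-1}}{c} \sum_{\ell=0}^{c-1} u_\ell,
\end{equation*}
which is the desired identity.

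The proof is essentially a one-line calculation, so there is no real obstacle: the only minor point requiring care is the indexing convention for $k \bmod c$ (whether the $\ell = 0$ term corresponds to $k \in \{c, 2c, \dotsc\}$ or to a shifted set), but this affects only which residue is labelled as which and not the final sum, which ranges over all residues modulo $c$. The structural parallel with the FPR proof (where one substitutes $\FP_{\mathcal Q,k}^w = 1 - (1-\FP_{\mathcal F,c})^{L-1}(1 - \FP_{\mathcal F, k \bmod c})$ and averages) makes it clear that no new idea beyond periodic averaging is needed.
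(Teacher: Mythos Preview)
Your proposal is correct and matches the paper's own proof essentially line for line: both unfold the Cesàro average, invoke Theorem~\ref{thm:queue-pfn} to write $\FN_{\mathcal Q,k}^w = u_c^{L-1}u_{k\bmod c}$, and exploit the $c$-periodicity to collapse the double sum. Your write-up is in fact slightly more explicit about the periodicity step than the paper's terse chain of equalities.
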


\begin{proof}
	\begin{align*}
	\FNR_{\mathcal Q, cn}^w 
	& = \frac{1}{cn} \sum_{k=1}^{cn} \FN_{\mathcal Q, k}^w 
	= \frac{1}{cn} \sum_{k=1}^{n} \sum_{\ell = 0}^{c-1} \FN_{\mathcal Q, k + \ell}^w 
	= \frac1c \sum_{\ell = 0}^{c-1} \FN_{\mathcal Q, \ell}^w \\
	& = \frac1c \sum_{\ell = 0}^{c-1} u_c^{L-1} u_{\ell} 
	= \frac{u_c^{L-1}}{c}  \sum_{\ell = 0}^{c-1}u_\ell
	\end{align*}
\end{proof}

As for the probabilities, the expressions derived above for the FNR and FNR are valid to first order in $(w - cL)/|\Gamma|$, i.e. they are good approximations even when $w \approx cL$. 

\subsection{Optimising queuing filters}
\label{sec:optimising}
Let us relax, temporarily, the a priori constraint that $w = cL$. The parameter $L$ determines how many subfilters appear in the queuing construction. Summing up the false positive and false negative rates, we have a total error rate 
$	\operatorname{ER}_{\mathcal Q, cn}^{w}
	= 1 - \alpha \beta^{L-1} + \alpha' \beta'^{L-1}
$,
where $\beta = 1 - \FP_{\mathcal F, c}$, $\beta' = u_c$, 
$\alpha = \frac1c \sum_{\ell = 0}^{c-1}1 - \FP_{\mathcal F, \ell}$ and $\alpha' = \frac1c \sum_{\ell = 0}^{c-1} u_\ell$ depend on $w$, $c$ and the choice of subfilter type $\mathcal F$.

Because $u_\eta = p_{\eta}\FN_{\mathcal F,\eta} + \left(1-p_{\eta}\right)\left(1-\FP_{\mathcal F,\eta}\right)$, differentiating with respect to $L$, knowing that $w = Lc$, and equating the derivative to $0$, one can find the optimal value for $L$ by solving for $x$, which has been obtained via Mathematica:

\begin{align*}
&-\alpha \beta^{-1+x} \log(\beta) 
+ \left(\beta + \FN_{\mathcal F, c} (-1+x)\right)^{-2+x} x^{-x} \left[\vphantom{\left(\FN^{\FN}\right)}\right. 
	-\alpha \beta + \FN_{\mathcal F, c} \left(-\beta (-2+x)+ \FN_{\mathcal F, c} (-1+x)\right) \\
&\quad	+ \left(\alpha+ \FN_{\mathcal F, c} (-1+x)\right)
\times\left(\beta + \FN_{\mathcal F, c} (-1+x)\right) \left(\log\left(\beta + \FN_{\mathcal F, c} (-1+x)\right)-\log(x)\right)
\left.\vphantom{\left(\FN^{\FN}\right)}\right] = 0
\end{align*}

If numerically solving the equation for individual cases is feasible, it seems unlikely that a closed-form formula exists.

\subsection{Queuing filters from existing DDFs}
\label{sec:subfilter-select}
\label{sec:optimal_ddf}
Our queuing construction relies on a choice of subfilters. A first observation 
is that we may assume that all subfilters can be instances of a single 
DDF design (rather than a combination of different designs).

Indeed, a simple symmetry argument shows that a heterogenous selection of subfilters is always worse than a homogeneous one: the crux is that all subfilters play the same role in turn. Therefore we lose nothing by replacing atomically one subfilter by a more efficient one. Applying this to each subfilter we end up with a homogenous selection.

It remains to decide which subfilter construction to choose. The results of an experimental comparison of different DDFs (details about the benchmark are given in Section~\ref{sub:saturation}) are summarized in Figure~\ref{fig:graph_n}. It appears that the most efficient filter (in terms of saturation rate) is the QHT, from \cite{10.1145/3297280.3297335}.



\section{Experiments and Benchmarks}\label{sec:bench}

This section provides details and additional information on the benchmarking experiments run to validate the above analysis. All code is accessible online and will be disclosed after peer review.

\paragraph{Benchmarking queuing filters.}
Applying the queuing construction to DDFs from the literature, we get new filters which are compared in the wDDP setting.

In Section~\ref{sec:optimal_ddf} we suggested the heuristic that the DDFs with the least saturation rate in the DDP would yield the best (error-wise) queuing filter for the wDDP. This heuristic is supported by results, summarized in Figure~\ref{tab:comparing_queuing}. For this benchmark we used the following parameters: uniform stream from an alphabet of size $|\Gamma| = 2^{18}$, memory size $M=100,000$ bits, sliding window of size $w = 10,000$, and we measure the error rate (sum of $\FNR^w$ and $\FPR^w$).

A surprising observation is that when $Lw$ approaches the size of the stream, there is a drop in the error. This is an artifact due to the finite size of our simulations; the stream should be considered infinite, and this drop disappears as the simulation is run for longer (see Appendix~\ref{app:finite}). This effect also alters the error rates for smaller window sizes, albeit much less, and we expect that filter designers care primarily about the small window regime. Nevertheless a complete understanding of this effect would be of theoretical interest, and we leave the study of this phenomenon for future work.

\begin{figure}[t]
		\centering
		\begin{tikzpicture}
\begin{semilogxaxis}[legend cell align = left,
	legend style={at={(0.98,0.02)},anchor=south east, nodes={scale=0.95, transform shape}},%
	ylabel=$\FPR^w + \FNR^w (\times 100)$, xlabel=$w$]
  	\addplot[color=red, mark=+] table[x=w,y=Error] {graphs/qht.dat};
\addlegendentry{Q\_QHT}
  	\addplot[color=blue, mark=x] table[x=w,y=Error] {graphs/sqf.dat};
\addlegendentry{Q\_SQF}

\addplot[color=olive, mark=square, mark options = {scale=0.7}] table[x=w,y=Error] {graphs/sbf.dat};
\addlegendentry{Q\_SBF}

  	\addplot[color=teal, mark=triangle] table[x=w,y=Error] {graphs/cuckoo.dat};
\addlegendentry{Q\_Cuckoo}

\end{semilogxaxis}

\end{tikzpicture}
		\caption{Error rate (times 100) of queuing filters as a function of window size, $M=10^5$, $L=10$, $|\Gamma| = 2^{18}$, on a stream of size $10^7$.}\label{tab:comparing_queuing}
\end{figure}
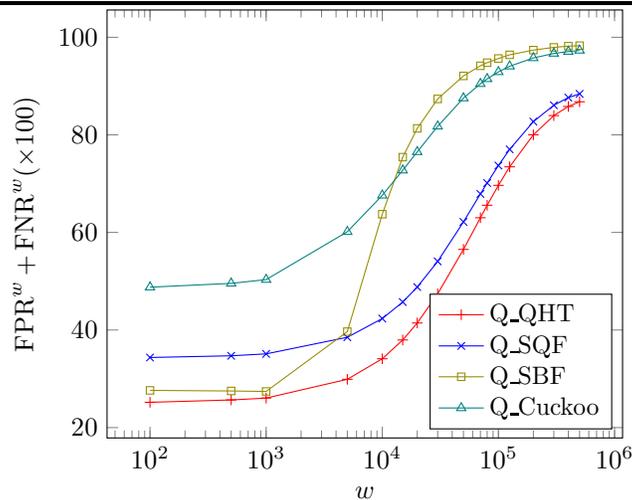


\paragraph{The number of subfilters}\label{sub:L_bench}
The number of subfilters $L$ is an important parameter in the queuing construction, as it affects the filter's error rate in a nontrivial way. An illustration of this dependence is shown in Figure~\ref{fig:bench_l} which plots the error rate of a queueing QHT on an uniform stream of alphabet size $\Gamma = 2^{16}$, with $10^5$ elements in the stream, on various sliding window sizes. 

\begin{figure}[t]
	\centering
		\begin{tikzpicture}
	\begin{semilogxaxis}[legend cell align = right, legend pos = outer north east, ylabel= $\FPR^w + \FNR^w (\times 100)$, xlabel = L]
   \addlegendimage{empty legend}
   \addlegendentry{$w$ value\hspace{.6cm}}
   
	\addplot table[x=L,y=Error] {graphs/1000.dat};
	\addlegendentry{$10^3$}	
	\addplot table[x=L,y=Error] {graphs/5000.dat};
	\addlegendentry{$5\cdot 10^3$}
	\addplot[mark=triangle] table[x=L,y=Error] {graphs/10000.dat};
	\addlegendentry{$10^4$}
	\addplot table[x=L,y=Error] {graphs/30000.dat};
	\addlegendentry{$3\cdot 10^4$}
	\addplot table[x=L,y=Error] {graphs/50000.dat};
	\addlegendentry{$5\cdot 10^4$}
	\addplot table[x=L,y=Error] {graphs/100000.dat};
	\addlegendentry{$10^5$}
	\end{semilogxaxis}
	\end{tikzpicture}
	\caption{Evolution of the error rate of a queueing QHT as a function of $L$, for several window sizes, with $M=10^5$, $|\Gamma| = 2^{18}$, on a stream of size $10^6$.}\label{fig:bench_l}
\end{figure}
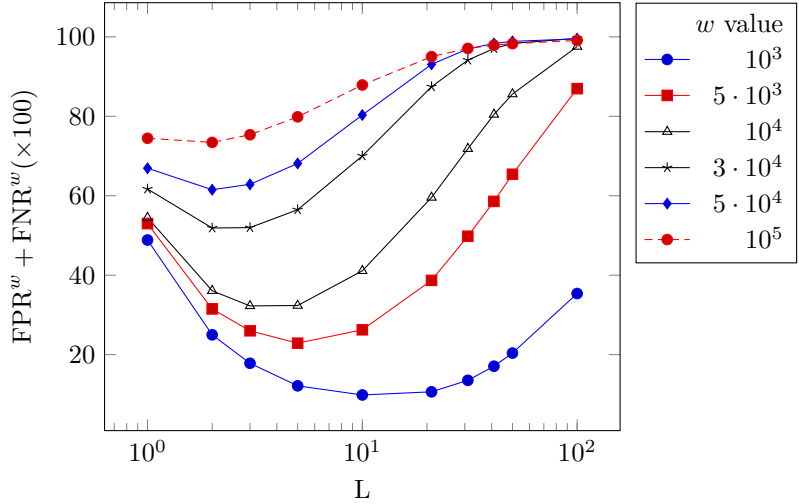
We observe that the optimal value for $L$ does indeed depend on the desired sliding window. However, other experiments on alphabets of other sizes yield very similar results, hence validating the observation made in Section~\ref{sec:optimising} that the optimal number of subfilters does not depends on the alphabet, at least in first approximation.

\paragraph{Filters vs queued filters.}
Using the same stream as previously, we can build queued filters (with an optimal value $L$ for each considered sliding window) and compare their performances to that of non-modified filters. Results on the QHT and SQF are shown in Figure~\ref{fig:improvement}, results for the Cuckoo and SBF are shown in Appendix~\ref{app:improvement}.

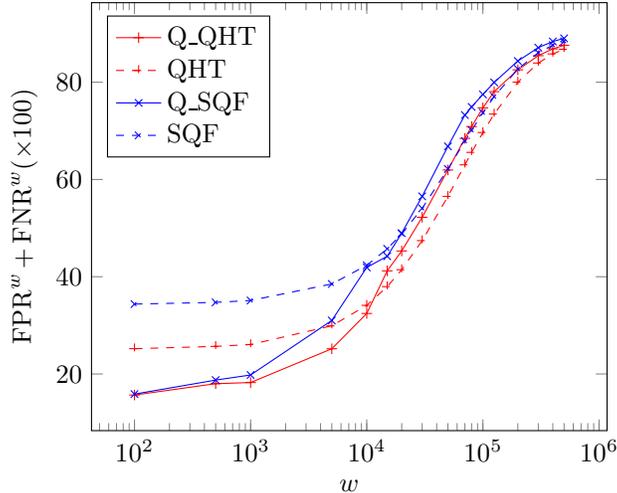
\begin{figure}[t]
	\centering
	\begin{tikzpicture}
\begin{semilogxaxis}[legend cell align = left,%
	 legend pos = north west,
	  xlabel=$w$, ylabel=$\FPR^w + \FNR^w (\times 100)$]
\addplot[color=red, mark=+] table[x=w,y=Error] {graphs/queueing_qht.dat};
\addlegendentry{Q\_QHT};
\addplot[color=red, mark=+, dashed] table[x=w,y=Error] {graphs/qht.dat};
\addlegendentry{QHT};

\addplot[color=blue, mark=x] table[x=w,y=Error] {graphs/queueing_sqf.dat};
\addlegendentry{Q\_SQF};
\addplot[color=blue, mark=x, dashed] table[x=w,y=Error] {graphs/sqf.dat};
\addlegendentry{SQF};

\end{semilogxaxis}

\end{tikzpicture}
	\caption{Comparing performances of QHT and SQF filters, in `vanilla' setting or when placed in our queueing structure.}\label{fig:improvement}
\end{figure}

We observe that queueing filters do not necessarily behave better than their 'vanilla' counterparts, especially on large sliding windows. This can be interpreted by the fact that the DDPs were optimised for infinite sliding windows, and as such operate better than their queueing equivalent on large sliding windows.

\section{Adversarial Resistance of Queueing Filters}
\label{sec:adversarial}

As DDFs have numerous security applications, we now discuss the queuing construction from an adversarial standpoint. We consider an adversarial game in which the attacker wants to trigger false positives or false negatives over the sliding window. One motivation for doing so is causing cache saturation or denial of service by forcing cache misses, triggering false alarms or crafting fradulent transactions without triggering fraud detection systems.

To create a realistic adversary model, we assume like in \cite{DBLP:journals/corr/abs-1709-08920} that the adversary does not have access to the filter's internal memory. Nonetheless, after every insertion she knows whether the inserted element was detected as a duplicate or not. 

We first recall the definition of an adversarial game, adapted to our context.
\begin{definition}
	An adversary $\mathcal A$ feeds data to a sliding window DDF $\mathcal Q$, and for each inserted element, $\mathcal A$ knows whether $\mathcal Q$ answers $\DUPLICATE$ or $\UNSEEN$, but has not access to $\mathcal Q$'s internal state $\mathcal M$. The game has two distinct parts.
\begin{itemize}
	\item In the first part, $\mathcal A$ can feed up to $n$ elements to $\mathcal Q$ and learns $\mathcal Q$'s response for each insertion.
	\item In the second part, $\mathcal A$ sends a unique element $e^\star$.
\end{itemize} 
 $\mathcal A$ wins the $n$-false positive adversarial game (resp. $n$-false negative adversarial game) if and only if $e^\star$ is a false positive (resp. a false negative).
 
 \end{definition}
Variants of these games over a sliding window of size $w$ are immediate.
 

\begin{definition}[Adversarial False Positive Resistance]
	We say that a DDF $\mathcal F$ is $(p, n)$-\emph{resistant to adversarial false positives} if no polynomial-time probabilistic (PPT) adversary $\mathcal A$ can win the $n-$false positive adversarial game with probability greater than $p$.  
\end{definition}
Note that if $\mathcal F$ is $(p,n)$-resistant, then it is $(p, m)$-resistant for all $m < n$.

We define similarly the notion of being \emph{resistant to adversarial false negatives}. Finally, both definitions also make sense in a sliding window context.


\begin{theorem}[Bound on false positive resistance]
	Let $\mathcal Q$ be a filter of $L$ subfilters $\mathcal F_i$, with $c$ insertions maximum per subfilter, let $w$ be a sliding window. 
	
	If $\mathcal F$ is $(p, c)$-resistant to adversarial false positive attacks and $cL \leq w$, then $\mathcal Q$ is $(1-(1-p)^L, w)$-resistant to adversarial false positive attacks on a sliding window of size $w$.
	
	If $cL > w$, the adversary has a success probability of at least $1-(1-p)^L$.
\end{theorem}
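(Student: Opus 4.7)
My plan is to handle the two cases $cL \le w$ and $cL > w$ separately, establishing an upper bound on adversarial success in the first and a matching lower bound in the second, both via reduction to the single-subfilter $(p,c)$-resistance hypothesis.

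For the upper bound ($cL \le w$), I would start from the structural observation that $\mathcal Q$ answers \DUPLICATE on the challenge $e^\star$ exactly when at least one subfilter $\mathcal F_i$ does, and that when $cL \le w$ the entire contents of $\mathcal Q$ sit within the last $w$ stream positions. Consequently, $e^\star \notwdup E$ forces $e^\star$ to have never been inserted into any $\mathcal F_i$, so each erroneous subfilter answer is a genuine subfilter-level false positive. Writing $E_i$ for that event, the goal is to bound $\Pr[\bigcup_i E_i] \le 1-(1-p)^L$. For each $i$ I would build a reduction $\mathcal B_i$ against a single fresh copy of $\mathcal F$: $\mathcal B_i$ internally instantiates $L-1$ independent copies of $\mathcal F$ playing the other queue positions and forwards to the real target only the $c$ insertions (and the final lookup) that are routed to position $i$. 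Because $\mathcal A$'s transcript is distributed identically to a real interaction with $\mathcal Q$, the $(p,c)$-resistance of $\mathcal F$ yields $\Pr[E_i] \le p$; combining over $i$ using the independence of the subfilter coin tapes gives $\Pr[\bigcup_i E_i] \le 1-(1-p)^L$.

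For the lower bound ($cL > w$), I would exhibit a concrete adversary $\mathcal A^\star$ realising the stated success probability. The idea is to stitch together $L$ independent executions of an optimal single-subfilter false-positive attack, one per queue position: during the $c$ insertions in which $\mathcal F_j$ sits at the front, $\mathcal A^\star$ plays the optimal $p$-attack against the fresh $\mathcal F_j$. At the end, $\mathcal A^\star$ queries $e^\star$; each subfilter has been independently primed to false-positive with probability at least $p$, and because the $L$ attacks were performed on independent instances, at least one succeeds with probability at least $1-(1-p)^L$.

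The main obstacle is making the independence step in the upper bound watertight. The adaptive nature of $\mathcal A$ means the stream routed to each subfilter depends on the combined $\mathcal Q$-output across all subfilters, creating potential cross-dependencies between the $E_i$. The resolution is to condition on $\mathcal A$'s random tape so that the stream becomes a deterministic function of the subfilter coins, then invoke the $\mathcal B_i$-reduction inside this conditioning, so that $\Pr[E_i \mid E_1,\dotsc,E_{i-1}] \le p$ holds at every step; chaining these conditional bounds gives the product form $1-(1-p)^L$. A plain union bound would only yield the weaker $Lp$, so this finer analysis is essential to match the stated constant.
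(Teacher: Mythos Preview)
Your upper-bound reduction for $cL \le w$ is actually more principled than the paper's argument, which simply lifts the random-stream product formula $1-(1-\FP_{\mathcal F,c})^L$ to the adversarial setting and maximises each factor without further justification. Your per-subfilter reductions $\mathcal B_i$ correctly deliver the marginal bounds $\Pr[E_i]\le p$, but the passage to the product bound remains shaky: the chain you need is $\Pr[E_i\mid \bar E_1,\dotsc,\bar E_{i-1}]\le p$ (complements, not the events themselves), and fixing $\mathcal A$'s tape does not decouple the $E_i$, because they all depend on the common adaptively-chosen challenge $e^\star$, which in turn depends on feedback from \emph{all} subfilters. Neither your sketch nor the paper fully closes this gap, so on this half you are at least no worse off than the original.

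Your lower-bound strategy for $cL>w$, however, misses the key mechanism and contains a direction error. The hypothesis ``$\mathcal F$ is $(p,c)$-resistant'' is an \emph{upper} bound on single-subfilter attacks; it does not hand you an attack achieving probability $p$, so you cannot ``prime each subfilter to false-positive with probability at least $p$''. Even granting tightness of $p$, the $L$ optimal subfilter attacks may each target a different challenge element, whereas $\mathcal Q$ must be fooled on one common $e^\star$. Most tellingly, your construction never uses the hypothesis $cL>w$; if it worked, it would already prove the lower bound in the $cL\le w$ regime as well. The paper's idea is precisely to exploit the slack $cL-w$: the oldest subfilter still retains insertions that have already slid out of the window, so the adversary \emph{prepends} $cL-w$ copies of $e^\star$ to an otherwise-optimal stream. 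At challenge time those copies lie outside the window (so $e^\star\notwdup E$ and a \DUPLICATE answer is a genuine false positive over $w$), yet the oldest subfilter has actually seen $e^\star$ and is likely to answer \DUPLICATE. That out-of-window insertion is the mechanism separating the two regimes, and it is absent from your attack.
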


\begin{proof}
	If $cL \leq w$, then information-theoretically the subfilters only have information on elements in the sliding window.
	The false positive probability for $\mathcal Q$ is $1 - (1 - \FP_{\mathcal F, c})^L$, which is strictly increasing with $\FP_{\mathcal F,c}$. Hence, the optimal solution is reached by to maximising the false positive probability in each subfilter $\mathcal F_i$. By hypothesis the latter is bounded above by $p$ after $c$ insertions.
	
	On the other hand, if $cL > w$ then the oldest filter holds information about elements that are not in the sliding window anymore. Hence, a strategy for the attacker trying to trigger a false positive on $e^\star$ could be to make it so these oldest elements are all equal to $e^\star$. Let $E$ be the optimal adversarial stream for triggering a false positive on the sliding window $w$ with the element $e^\star$, when $cL \leq w$. The adversary $\mathcal A$ can create a new stream $E' = e^\star | e^\star | \dotsc | E$ where $e^\star$ is prepended $cL - w$ times to $E$.
	
	After $w$ insertions, the last subfilter will answer $\DUPLICATE$ with probability at least $p$, hence giving a lower bound on $\mathcal A's$ success probability. If, for some reason, the last subfilter answers $\DUPLICATE$ with probability less than $p$, then the same reasoning as for when $cL \leq w$ still applies, hence we get the correspondig lower bound (which is, in this case, an equality).
\end{proof}


\begin{theorem}[Bounds on false negative resistance]
	Let $\mathcal Q$ be a filter of $L$ subfilters of kind $\mathcal F$, with $c$ insertions maximum per subfilter, and let $w$ be a sliding window. 
	
	If $\mathcal F$ is $(p, c)$-resistant to adversarial false negative attacks, then $\mathcal A$ can win the adversarial game on the sliding window $w$ with probability at least $p^L$.
	
	Furthermore, for $q$ the lower bound on the false positive probability $\FP_{\mathcal F,c}$ for a given stream, if $w \leq (L-1)c$ then $\mathcal Q$ is $(\min(1 - q, p)^{L-1}p, w)$-resistant to false negative attacks on the sliding window $w$.
	On the other hand if $w > (L-1)c$ then $\mathcal Q$ is  $(\max(1 - q, p)^{L}, w)$-resistant to false negative attacks on the sliding window $w$.
\end{theorem}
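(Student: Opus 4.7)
My plan is to decompose the queuing filter's answer using the basic relation $\mathcal Q.\mathsf{Lookup}(e^\star)=\UNSEEN \iff \forall i,\ \mathcal F_i.\mathsf{Lookup}(e^\star)=\UNSEEN$, and to treat the $L$ subfilters as independent instances of $\mathcal F$ fed by disjoint length-$c$ sub-streams that the adversary controls block by block. Under independence across subfilters, the attacker's success probability factors as $\prod_i \Pr[\mathcal F_i.\mathsf{Lookup}(e^\star)=\UNSEEN]$, and the two per-subfilter bounds that drive everything are: if the adversary inserted $e^\star$ in $\mathcal F_i$'s sub-window then the subfilter's \UNSEEN-probability is at most $p$ (that is exactly the adversarial FN-resistance hypothesis on $\mathcal F$); otherwise the \UNSEEN-probability equals $1-\FP_{\mathcal F,c}\le 1-q$ by the stream-independent lower bound $q$ on $\FP_{\mathcal F,c}$.

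For the lower bound $p^L$, I would construct an explicit attacker who partitions her insertion budget into $L$ consecutive blocks of $c$ insertions aligned with the lifetimes of the individual subfilters, and runs inside each block the attack that witnesses the $(p,c)$-resistance of $\mathcal F$, while also including $e^\star$ in every block (which automatically places $e^\star$ in the current sliding window). Independence of subfilters then makes the product of per-block probabilities at least $p^L$. For the two upper bounds I case-split on $w$ versus $(L-1)c$. If $w>(L-1)c$, every subfilter's sub-window overlaps the sliding window, the adversary faces no placement constraint, and she may pick the best of the two strategies in each subfilter; each factor is then at most $\max(1-q,p)$ and the product is at most $\max(1-q,p)^L$, as claimed. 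If $w\le(L-1)c$, the oldest subfilter $\mathcal F_{L-1}$ has its entire sub-window outside the sliding window, so for $e^\star$ to be a genuine duplicate the adversary must insert it into at least one of the other $L-1$ subfilters; this pins at least one factor to the $\le p$ branch, and the remaining $L-1$ factors are each bounded by the adversary's choice, yielding a product of the form $p\cdot(\text{per-factor bound})^{L-1}$.

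The main obstacle I anticipate is reconciling the stated bound $\min(1-q,p)^{L-1}\,p$ with the naive optimisation above, which a priori produces $p\cdot\max(1-q,p)^{L-1}$ since the adversary's natural response is to pick the \emph{larger} of the two per-subfilter bounds. Either the result implicitly assumes $p\le 1-q$, under which both expressions collapse to $p^L$ and agree with the first part of the theorem, or a sharper combinatorial argument is required that forbids the adversary from simultaneously realising $\FN=p$ in one subfilter and $\FP=q$ in the others through a single coherent sub-stream; I would investigate whether jointly realising both extremes requires structurally incompatible constructions. Peripheral issues would be handled standardly: boundary misalignment when $w$ is not a multiple of $c$ would be treated as in Theorems~\ref{thm:queue-pfp} and~\ref{thm:queue-pfn} by substituting $\FN_{\mathcal F,m\bmod c}$ on the frontmost subfilter, and the independence hypothesis would be justified by assuming, as is standard throughout the paper, that distinct subfilter instances use independent internal randomness.
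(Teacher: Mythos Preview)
Your decomposition and the lower-bound construction match the paper's proof exactly: the paper also concatenates $L$ copies of the optimal length-$c$ adversarial substream and takes the product $p^L$.

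For the case $w\le (L-1)c$ you have in fact reproduced the paper's own argument, and the tension you flag is real. The paper's proof literally says ``the best strategy is where as many filters as possible answer \UNSEEN with probability $\max(p,1-q)$, knowing that at least one filter must contain $e^\star$ and as such its probability for returning \UNSEEN is at most $p$, hence the result.'' In other words, the paper's proof also establishes $p\cdot\max(1-q,p)^{L-1}$, not the $\min$ written in the statement. There is no hidden sharper combinatorial argument to look for; your instinct that the natural optimisation produces $\max$ is exactly what the paper derives, and the $\min$ in the statement appears to be a typo. You should not expend effort trying to show incompatibility of the two extremal substreams.

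Your treatment of the case $w>(L-1)c$, however, has a genuine gap. You write that ``every subfilter's sub-window overlaps the sliding window, the adversary faces no placement constraint''. But that inference runs the wrong way: if every subfilter's sub-window lies inside the sliding window, the duplicate constraint (that $e^\star$ must appear somewhere in the last $w$ positions) still forces $e^\star$ into \emph{some} active subfilter, pinning one factor to $\le p$ and giving you only $p\cdot\max(1-q,p)^{L-1}$ again. The paper's mechanism is different and essential: the adversary inserts $e^\star$ at position $c$, then inserts $(L-1)c+1$ further elements so that the subfilter holding $e^\star$ is \emph{popped} from the queue before the final query. Because $w>(L-1)c$, $e^\star$ is still inside the sliding window, yet it is absent from every currently active subfilter. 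Only this ``falling-off'' trick allows all $L$ factors to be of the $1-\FP'$ type, which is what justifies $\max(1-q,p)^{L}$ rather than $p\cdot\max(1-q,p)^{L-1}$. You should replace your overlap argument with this explicit popping construction.
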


\begin{proof}
	Let us first prove that a PPT adversary $\mathcal A$ can win the game with probability at least $p^L$.
	For this, let us consider the adversarial game against the subfilter $\mathcal F$: after $c$ insertions from an aversarial stream $E_c$, $\mathcal A$ choses a duplicate $e^\star$ which will be a false negative with proability $p$.
	Hence, if $\mathcal A$ crafts, for the filter $\mathcal Q$, the following adversarial stream $E' = E_c \mid E_c \mid \cdots \mid E_c$ consisting of $L$ concatenations of the stream $E_c$, then $e^\star$ is a false negative for $\mathcal Q$ if and only if it is a false negative for all subfilters $\mathcal F_i$, hence a success probability for $\mathcal A$ of $p^L$.
	
	Now, Let us prove the case where $w\leq (L-1)c$. In this case, at any time, $\mathcal Q$ remembers all elements from inside the sliding window. As we have seen in the previous example, the success probability of $\mathcal A$ is strictly increasing with the probability of each subfilter to answer $\UNSEEN$. The probability of a subfilter to answer $\UNSEEN$ is:
	
	\begin{itemize}
		\item $\FN'_{\mathcal F,c}$ if $e^\star$ is in the subfilter's sub-sliding window;
		\item $1 - \FP'_{\mathcal F, c}$ if $e^\star$ is not in the subfilter's sub-sliding window
	\end{itemize}
	where $\FN'$ and $\FP'$ are the probabilities of false negative and positives on the adversarial stream (which may be different from a random uniform stream).
	
	However, since $e^\star$ is a duplicate, it is in at least one subfilter's sub-sliding window. As such, the optimal strategy for $\mathcal A$ is to maximise the probability of all subfilters to answer $\UNSEEN$.  
	Now, $\FN'_{\mathcal F,c}$ is bounded above by $p$ and $1 - \FP'_{\mathcal F, c}$ is bounded above by $1-q$, so the best strategy is where as many filters as possible answer $\UNSEEN$ with probability $\max(p, 1-q)$, knowing that at least one filter must contain $e^\star$ and as such its probability for returning $\UNSEEN$ is at most $p$, hence the result.
	
	Now, let us consider the case when $w > (L-1)c$. We have already introduce the element $e^\star$ in the last $w$ elements, and we want to insert it again. It is possible, for the adversary, to create the following stream $E = (e_1, e_2, \dotsc, e_{c-1}, e^\star, e_{c+1}, \dotsc,$ $ e_{Lc}, e_{Lc + 1})$, and to insert $e^\star$ afterwards.
	
	When $e_{Lc+1}$ is inserted, all elements $(e_1, \dotsc, e_{c-1}, e^\star)$ are dropped as the oldest subfilter is popped. Hence, in this context $e^\star$ is not in any subfilter anymore, so by adapting the previous analysis, $\mathcal A$ can get a false negative with probability at most $\max(1 - q, p)^{L}$. 
\end{proof}



%
%
%

 \bibliographystyle{splncs04}

\bibliography{qhtv2}

\appendix

\section{Effects of the simulation's finiteness}
\label{app:finite}

Theoretical results about the queuing construction apply in principle to an infinite stream. However, simulations are necessarily finite, and for very large windows (that are approximately the same size as the whole stream) this causes interesting artefacts in the error rates.

Note that these effects have very little impact on practical implementations of queuing filters, since almost all use cases assume a window size much smaller than the stream (or, equivalently, a very large stream). Nevertheless we illustrate the effect of the finite simulation and the parameters affecting it, if only to motivate a further analytical study of this phenomenon.

Figure~\ref{fig:strangess} measures the error rate as a function of $w$, for different stream sizes $N$. A visible decrease in $ER$ can be found around $w\approx N$. While we do not have any explanation for the difference in the peaks sizes and exact location, we give the hypothesis that it is related to the choice of $|\Gamma|$.

\begin{figure}[t]
	\centering
		\begin{tikzpicture}
\begin{semilogxaxis}[
	legend columns=2,
	legend pos=south east,
	legend style = {nodes={scale=0.95, transform shape}},
	 xlabel=$w$, ylabel=$\FPR^w + \FNR^w (\times 100)$]
  	\addplot[mark=+,color=blue] table[x=a,y=b] {graphs/10e5.dat};
\addlegendentry{$10^5$};
  	\addplot[mark=x,color=red] table[x=a,y=b] {graphs/10e6.dat};
\addlegendentry{$10^6$}
  	\addplot[mark=triangle,color=olive] table[x=a,y=b] {graphs/10e7.dat};
\addlegendentry{$10^7$}
  	\addplot[mark=square,color=teal] table[x=a,y=b] {graphs/10e8.dat};
\addlegendentry{$10^8$}
\end{semilogxaxis}

\end{tikzpicture}
		\caption{Error rate for queuing QHT ($L = 10$, $M = 10^5$, $|\Gamma| = 2^{16}$) with streams of size $10^5$ to $10^8$.}\label{fig:strangess}
\end{figure}
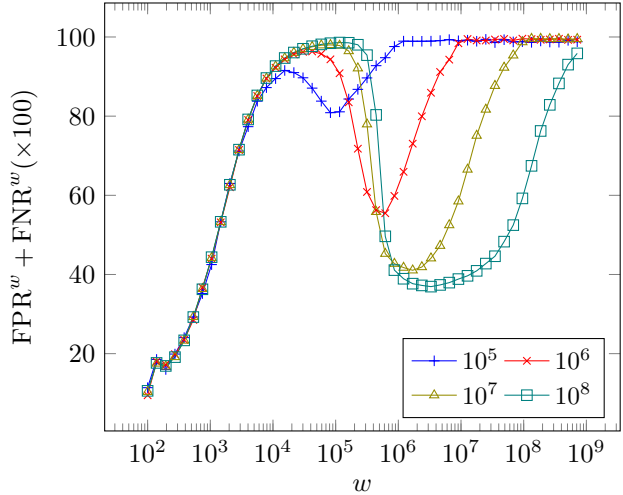

As can be seen on this simulation, there is only disagreement around $w \approx N/L$, and increasing $N$ results in a later and smaller peak. 

It is also possible to run simulations for different alphabet sizes $\Gamma$, which shows that the peak's position increases with $|\Gamma|$, although the relationship is not obvious to quantify.

\section{Filters vs queued filters (complement)}\label{app:improvement}
We here run a comparison of the Cuckoo Filter relative to the Queueing Cuckoo Filter, as well as the SBF relatively to the queueing SBF. The results are given in Figure~\ref{fig:appendix_improvement}.

\begin{figure}[t]
	\centering
	\begin{tikzpicture}
		\begin{semilogxaxis}[
			legend cell align = left,%
			legend pos=south east,
			legend style={nodes={scale=0.7, transform shape}},
			xlabel=$w$, ylabel=$\FPR^w + \FNR^w (\times 100)$]
		\addplot[color=teal, mark=triangle] table[x=w,y=Error] {graphs/queueing_cuckoo.dat};
		\addlegendentry{Q\_Cuckoo};
		\addplot[color=teal, mark=triangle, dashed] table[x=w,y=Error] {graphs/cuckoo.dat};
		\addlegendentry{Cuckoo};
		\addplot[color=olive, mark=square, mark options = {scale=0.7}] table[x=w,y=Error] {graphs/queueing_stablebloom.dat};
		\addlegendentry{Q\_SBF};
		\addplot[color=olive, mark=square, mark options = {scale=0.7}, dashed] table[x=w,y=Error] {graphs/sbf.dat};
		\addlegendentry{SBF};
		\end{semilogxaxis}
	
	\end{tikzpicture}
	\caption{Comparing performances of the Cuckoo and SBF filters, in `vanilla' setting or when placed in our queueing structure.}\label{fig:appendix_improvement}

\end{figure}
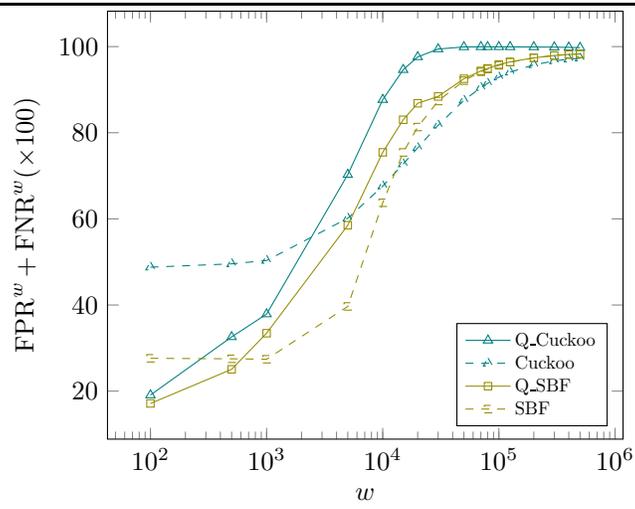

\end{document}